\setlist[enumerate]{itemsep=5pt}
\newcommand{\bb}[1]{\mathbbm{#1}}
\newcommand{\EE}{\bb{E}}
\newcommand{\tr}{\mathrm{tr}}
\newcommand{\Tv}{{\mathrm{TV}}}
\newcommand{\TV}{{\delta}}
\newcommand{\bra}[1]{\langle#1|}
\newcommand{\ket}[1]{|#1\rangle}
\newcommand{\braket}[2]{\langle#1 | #2\rangle}
\newcommand{\rank}{\mathrm{rank}}
\newcommand{\CC}{\mathbb{C}}
\newcommand{\RR}{\mathbb{R}}
\newcommand{\ZZ}{\mathbb{Z}}
\newcommand{\one}{\hbox{\rm 1\kern-.27em I}}
\newcommand{\be}{\begin{equation}}
\newcommand{\ee}{\end{equation}}
\newcommand{\BS}{\boldsymbol\sigma}    
\newcommand{\BE}{\boldsymbol\eta}    
\newtheorem{theorem}{Theorem}
\newtheorem{corollary}[theorem]{Corollary}
\newtheorem{lemma}[theorem]{Lemma}
\newtheorem{definition}[theorem]{Definition}
\newtheorem{proposition}[theorem]{Proposition}
\newtheorem{remark}[theorem]{Remark}
\numberwithin{equation}{section}
\numberwithin{theorem}{section}
\newcommand{\Z}{\mathbb{Z}}
\DeclareMathOperator{\sign}{sign}
\DeclareMathOperator{\Arg}{Arg}
\newcommand{\1}{\mathds{1}}        
\let\Re\undefined
\DeclareMathOperator{\Re}{Re \,}
\begin{document}

	\addtokomafont{author}{\raggedright}		
\title{ \raggedright  
Entanglement entropy bounds for pure states of rapid decorrelation}



	\author{
	\begin{minipage}{.8\textwidth} 
	Michael Aizenman\textsuperscript{1}  and Simone Warzel\textsuperscript{2,3\,}\footnote{Corresponding author:  simone.warzel@tum.de}\\[1ex]
	\small \textsuperscript{1}Departments of Physics and Mathematics, Princeton University, Princeton, USA\\
	\textsuperscript{2}Departments of Mathematics and Physics, TU M\"unchen, Garching, Germany\\
	\textsuperscript{3}Munich Center for Quantum Science and Technology, Germany
	\end{minipage}
	}
	\date{\small March 14, 2025}
	
	\maketitle
	\vspace*{-1cm}

	\minisec{Abstract}
For pure states of multi-dimensional quantum lattice systems, which in a convenient computational basis have amplitude and phase structure of sufficiently rapid decorrelation,  we construct high fidelity approximations of relatively low complexity.   These are used for
a conditional proof of  area-law bounds for the states' entanglement entropy.  The condition is also shown to imply exponential decay of the state's mutual information between disjoint  regions, 
and hence exponential clustering of local observables. 
The applicability of the general results is demonstrated on the 
quantum Ising model in transverse field.  Combined with available model-specific information on spin-spin correlations, we establish an area-law type bound on the entanglement in the model's subcritical ground states, valid in all dimensions and up to the model's quantum phase transition.

\setcounter{tocdepth}{1}
{\small \tableofcontents}
\newpage
\section{Introduction} 
A striking feature of the quantum entropy, in which it  differs from its classical analog, 
is that in multicomponent systems, the entropy of a state  can be zero even when that of its restriction to a subsystem  is strictly positive.  
In such cases the entropy of the reduced state  is a measure of the quantum entanglement between the subsystem and the rest~\cite{OyPe93,Woud18}.  
One of our goals here is to present conditional decoupling criteria for an
area-law type bound  on the entanglement entropy of quantum states  in general space dimensions.    

\subsection{Entanglement entropy}
Our discussion is set in the context of quantum lattice systems whose single site components are qudits, of state space   associated with the complex Hilbert space $\CC^\nu$.
 The extended system's Hilbert space is the tensor-product $ \mathcal{H}_W := \otimes_{u \in W} \mathbb{C}^\nu $, indexed by sites of a finite subset $ W \subset \mathbb{Z}^d $.   For each choice of a basis of $\CC^\nu$, of $\nu$-elements whose collection we denote $\Omega_0$, there is a  corresponding orthonormal basis of $ \mathcal{H}_W$.  Its elements $ \ket{\BS} $ are labeled by    
  $ \boldsymbol{\sigma} = (\sigma_u)_{u\in W} \in \Omega_0^W=:\Omega_W$, 
configurations  of single-site variables with values in $ \Omega_0$ .\\

The composite system's pure states are rank-one projections 
associated with  normalised vectors $\ket{ \psi} \in \mathcal{H}_W$.  
In Dirac's notation these are  presentable as  $ | \psi \rangle\langle \psi | $.
More generally, quantum states are associated with density operators $\varrho$ which are non-negative  and of unit trace, 
each defining an  expectation-value functional  $O\mapsto \tr_{\mathcal H_W} O \varrho$  on the self-adjoint operators $ O $ on $ \mathcal{H}_W $ (which are the system's observables).

 Given $\ket \psi\in  \mathcal{H}_W$ and a subset  $ A\subset W $,  the reduced state  on $  \mathcal{H}_{A} $  is defined through the restriction of the expectation values $\bra \psi O \ket \psi$ to observables of the form $O= O_{A} \otimes \1_{W\backslash A} $.  Stated equivalently, the reduced state is the partial trace over the remaining subsystem, 
$ \varrho_{A}(\psi) := \tr_{\mathcal{H}_{W\backslash A} }  \ket \psi \bra \psi   $. 
A measure of $\ket \psi$'s  bipartite entanglement    is   $ \varrho_{A}(\psi) $'s von Neumann entropy:  
\begin{equation}\label{eq:vonNeumann}
S(\varrho_A) :=-  \tr_{ \mathcal{H}_{A}} \varrho_{A} \ln \varrho_{A}  = - \! \sum_{j=1}^{\dim \mathcal{H}_{A}}  \lambda_j \log \lambda_j \, ,   
\end{equation}
where $(\lambda_j)$ ranges over the eigenvalues of $\varrho_{A} $.

The entropy of any state in $A$ is trivially bounded by the set's  volume $|A|$: 
\be 
S(\varrho_A)  \leq  \ln \dim \mathcal{H}_{A} \propto |A| \, . 
\ee
Page's law \cite{Page:1993aa,Foong:1994aa} states that for 
vectors $ \ket{\psi}$ sampled uniformly  from the unit sphere of 
$ \mathcal{H}_W $  the entropy is typically of the order of this upper bound, down by only a multiplicative factor of order $O(1)$.
In contrast, the entanglement entropy of ground states  of  a local lattice Hamiltonian on  $  \mathcal{H}_W $ is expected to satisfy an \emph{area law}, i.e. $ S(\varrho_A) \propto  |\partial A | $, where $ \partial A := \{ u \in A \ | \ d(u,W\backslash A) = 1 \} $, with an enhancement at a quantum phase transition, cf.~\cite{Calabrese:2004aa,Eisert:2010aa}.

In this work, we shed light on this contrast through a decorrelation criterion under which 
the entanglement entropy of a given state's restriction to non-empty $A \subset \Z^d$  satisfies
\be \label{eq:area}
S(\varrho_A) \leq C \   l_0(A)\  | \partial A | \, ,  
\ee
at some $C <\infty$, which is independent of $A$, and a decoupling distance $ l_0(A) $, whose dependence on $ A $  relies on model-specific input.  
Section~\ref{sec:delta} elaborates on general conditions which ensure that the decoupling distance is bounded by a logarithm of the area, $  l_0(A)  \leq C \ln |\partial A | $.

  \subsection{Mutual information}\label{sec:mutualintro}

The mutual information  in a state $\ket{\psi}$ between two disjoint sets $ A_1, A_2 \subset W $  is 
\begin{eqnarray}\label{def:mutualinfo}
I_\psi(A_1 : A_2) := S(A_1) +  S(A_2)  - S( A_1\cup A_2)  =    S( A_1)  - S( A_1 \, |\,  A_2)\, . 
\end{eqnarray}
where $S(A) \equiv S(\varrho_A(\psi)) $ is an abbreviation for the entropy of the reduced state $\varrho_A(\psi)$, and  the last term is the conditional entropy 
$S( A_1 \,|\, A_2)  := S( A_1\cup A_2) - S(A_2)$.  Viewed yet differently, the mutual information encodes the relative entropy of $ \varrho_A(\psi) $ relative to the product state $ \varrho_{A_1}(\psi) \otimes \varrho_{A_2}(\psi) $.  \\

Our approach to the entanglement entropy, which proceeds through entropy reduction in a comparison state, also allows to conclude that under similar decorrelation assumptions the mutual information between two disjoint cubes $ A_1, A_2 $ decays exponentially in their distance:
\begin{equation} \label{mutual_info_bound}  
	I_\psi(A_1 : A_2)  \leq C \max\left\{ |\partial A_1|, |\partial A_2 | \right\}^\kappa \ \exp\left(- d(A_1,A_2)/\eta \right) \, ,
\end{equation}
at some $C,  \kappa , \eta \in (0,\infty) $.  
As a corollary we learn that  the state $\ket{\psi}$ exhibits \emph{exponential clustering}.  The two statements are linked through the quantum Pinsker inequality~\cite{HOT81,Woud18,Wolf:2008aa}, which bounds the covariance of  any 
pair of observables $ O_{A_j} $, acting on their $ \mathcal{H}_{A_j} $, 
 in terms of the mutual information of that state:
\begin{equation}\label{eq:Pinsker}
  \left|   \bra{\psi} O_{A_1} \otimes  O_{A_2} \ket{\psi} -  \bra{\psi} O_{A_1} \ket{\psi}\bra{\psi}   O_{A_2} \ket{\psi}  \right| 
  \leq \| O_{A_1} \| \| O_{A_1} \| \sqrt{2(\ln 2 )\ \ I_\psi(A_1 : A_2)} \, ,
\end{equation}
where the norm $ \| \cdot \| $  in the right side is the operator norm. 

\subsection{States of  quasi-local structure in a computational basis}  

The enabling criteria presented below for entropy reduction 
refer to the structure of a state $\ket{\psi}\in \mathcal{H}_W  $ as it appears in a convenient \emph{computational basis}   
in which it is presentable  as 
\be\label{eq:defpsip}
\ket{\psi}  = \sum_{\BS\in \Omega_W}e^{i\theta(\BS)}\, \sqrt{p(\BS) } \, \,  \ket{\BS} 
\ee
 in terms of  
a probability measure on $ \Omega_W $  of weights $ p(\BS) := |\psi(\BS)|^2  $,   
with rapid decay of correlations, and 
a phase  function $\theta:  \Omega_W \to (-\pi,\pi] $, $ \theta(\BS):=\Arg \psi(\BS) $ of  similar properties. \\

The discussion is not limited to but it simplifies in the case of states for which $\theta(\BS) \equiv 0$.  Such states are referred to as ``sign-problem free'' or \emph{stoquastic} - a term invoking the relevance of a probabilistic perspective.  Among  the examples of stoquasticity are the ground states of Hamiltonians $ H $ for which $e^{-\beta H}$ 
is positivity-preserving in the specified computational  basis.   A sufficient condition for that is that  the off-diagonal terms of $(-H)$ are real and non-negative.

Various applications of  states' probabilistic structure, e.g. conditions for symmetry breaking, or slow decay of correlations,  appeared  in studies of  specific models~\cite{Toth:1993aa,AKN94,Aizenman:1994aa,Grimmett:2008aa,Bjornberg:2009aa,Crawford:2010aa,Uel13,Bjornberg:2015kn,Aizenman:2020aa}.   
Implications of stoquasticity have also been investigated from the complexity point of view~\cite{BDOT08,Bravyi:2009aa,Klassen2019}.   

For stoquastic states our criterion will be expressed in terms of the  decorrelation rate 
of the classical looking probability  $ p(\BS) $.  An example to which this trivially applies, but to which our criterion is not limited, are  
states 
of classical Gibbs probability distribution under
additive local interactions, of the form 
\be\label{eq:Gibbs1}
p(\BS) \propto \exp\left( \sum_{u,v\in W} J_{u,v} \ g(T_u \BS ) \ g( T_v\BS)  \right) \,.
\ee
Here $J_{u,v}$  is a kernel of rapid enough decay, 
$ g: \Omega_{\Z^d} \to \mathbb{R} $ a bounded local function, and $ T_u:  \Omega_{\Z^d} \to  \Omega_{\Z^d} $ the translation by $ u $.   In case the kernel is finite range these states are examples of tensor-network states.  Our criterion  holds in that case  irrespective of phase transitions,  due to the classical state's  Markov property.

The general results presented here are not restricted to stoquasticity.  They will include a condition which requires the phase function be  decomposable into   sums of  
essentially local contributions.  That is a limitation since  even in the class of  Hamiltonians build of commuting terms there are ground states of interest with more intricate phase structure, cf.~\cite{Hast16}.  On the constructive side:  
phase functions  for which the assumed condition is met include functions of the form
\be \label{add_phase}
\theta(\BS) = \sum_{u,v\in W} \widehat J_{u,v} \ \widehat g(T_u \BS ) \  \widehat  g( T_v\BS)  \, 
\ee 
with $\widehat J_{u,v}$  a kernel of rapid enough decay, and $ \widehat  g: \Omega_{\Z^d} \to \mathbb{R} $ a bounded local function. 

\subsection{Example: the quantum Ising model}

Among the states to which our general discussion applies are the ground states of the quantum  (or  transversal field)  Ising model (QIM).   
The system's Hilbert space is $\mathcal H_W = \otimes_{u\in W}\CC^{2}$,  
and the Hamiltonian (at $h^z\equiv 0$) is
\be \label{QIM_H}
H =  - \sum_{u, v \in W} J_{u-v} \ S^z_u S^z_v -  b \sum_{u\in W}    S^x_u  \, , 
\ee 
 written in terms of elements of the Pauli triples of spin operators $ (S_u^x, S_u^y, S_u^z)$  
which act  on the $u$-component of the tensor product.  The coupling $ J_{u-v} \geq 0 $ is assumed to be finite range and ferromagnetic.  A convenient computational basis in  which the model's thermal and ground state are stoquastic is the eigenbasis of the local Pauli $ S^z $-operators.  Another such basis is the one in which the local $ S^x $ are diagonal.

The system's ground state undergoes a quantum phase transition at a threshold value~$b_c > 0$, which depends on $d$ and the couplings $J$.  The transition is manifested in the behavior of the model's two point correlation function, and most clearly in its infinite volume limit
\be   \label{eq:corr_lim}
\langle S^z_u S^z_v \rangle^{(\ZZ^d)} := \lim_{L\to \infty} \langle S^z_u S^z_v \rangle^{(W_L)} \, . 
\ee 
The superscript on the correlation indicates the domain over which the model is formulated, with $W_L = [-L,L]^d \cap \mathbb{Z}^d $.   In these terms, the model's ground state   undergoes the following transition (cf.~\cite{suzuki2012quantum, Bjornberg:2009aa}):\\
\noindent $\bullet$  For $ b > b_c $,  its  ground state correlation function decay exponentially:
\be \label{eq:shapness} 
0\leq \langle S^z_u S^z_v \rangle^{(\ZZ^d )} 
\leq C \ \exp\left(-|u-v|/\xi(b)\right)\, \quad \mbox{ at some $\xi(b) \in (0, \infty)$}\, . 
\ee 
\noindent $\bullet$  For $b \in [0, b_c)$ the correlation function exhibits long range order: 
\be
\langle S^z_u S^z_v \rangle^{(\ZZ^d )}  \geq M(b)^2 >0
\ee 
In that case the limit is decomposable into a superposition of two state functionals, 
$\langle O \rangle^{(\ZZ^d)}  = \frac 12 \big[ 
\langle O \rangle^{(\ZZ^d, +)} + 
\langle O \rangle^{(\ZZ^d, -)} 
   \big] 
   $
with $ \langle  S^z_u \rangle^{(\ZZ^d, \pm)} = \pm M(b)$, 
each with exponential decay of truncated correlations: 
\begin{align}
0\leq \langle  S^z_u ;  S^z_v \rangle^{(\ZZ^d, \pm)}  & :=  
\langle S^z_u S^z_v\rangle^{(\ZZ^d, \pm)} - \langle S^z_u \rangle^{(\ZZ^d, \pm)} \langle S^z_v\rangle^{(\ZZ^d, \pm)} \notag \\
& \leq  C \exp\left(-|u-v|/\xi(b)\right) \, . 
\end{align}

\noindent $\bullet$  At $ b = b_c $: $M(b)=0$~\cite{Bjornberg:2015kn}, but unlike at $b>b_c$ the  spin-spin correlation decays by a dimension-dependent power law. 

We shall not delve here into the direct formulation of the model's infinite-volume version (cf.~\cite{BratRob87}), in which the  $(\pm)$ states are presentable as ground states of an infinite spin array.  Instead, we shall focus on bounds for finite volumes $W_L$ which hold uniformly in $L$.  Thus, unless specified explicitly, the states refer to the model formulated in a finite subset $W\subset \ZZ^d$, which will often be omitted in the notation.  

We  also refrain from the more complete discussion of boundary conditions.  Unless specified otherwise there are to be assumed here as either free or constant (in terms of $\sigma^z$).  
For each of these,  the Perron-Frobenius theorem applied to $ \exp\left(-\beta H\right)$ in the $\sigma^z$-computational basis allows to conclude that the model's finite volume ground states are  unique at any $ b \neq 0 $.

The general results developed below, combined with  known decay properties of the $ S^z $-$ S^z $-correlations in this model \cite{Bjornberg:2009aa, Ding:2023aa}  enable us to prove  the following.
\begin{theorem}[Bounds for QIM] \label{thm:EEQIM}
 For the quantum Ising model on $W \subset \Z^d$, $d\geq 1$, with the  Hamiltonian \eqref{QIM_H}  at $b > b_c(d) $ (i.e. throughout the ground states' subcritical regime):
 \begin{enumerate}
 \item The entanglement entropy of the ground state's restriction to non-empty rectangular domains $A\subset \Z^d$ satisfies:
 \be \label{eq:areaQIM}
S(\varrho_A) \leq C \     | \partial A |  \ \ln |\partial A | \, ,  
\ee
at some $C <\infty$. 
 \item The ground state's mutual information between disjoint rectangular domains decays exponentially, satisfying 
 \eqref{mutual_info_bound}.  
 \item 
The ground state is exponentially clustering.  In particular  its truncated $ x $-correlations decay exponentially:  there is $ C , \xi(b) \in (0,\infty) $ such that for any $ W $, and $ u,v \in W $:
\be\label{eq:xexp}
 \left| \langle S^x_u ; S^x_v \rangle^{(W)} \right| \leq C \   \exp\left(-|u-v|/\xi(b)\right) \, .  
\ee
\end{enumerate}
\end{theorem}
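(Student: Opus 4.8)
The plan is to obtain all three assertions as applications of the general entropy-reduction results of the preceding sections (in particular the area-law bound \eqref{eq:area}, the mutual-information bound \eqref{mutual_info_bound}, and the estimate $l_0(A)\le C\ln|\partial A|$ discussed in Section~\ref{sec:delta}); the only model-specific work is to verify the decorrelation hypotheses for the QIM ground state in the $S^z$-computational basis. \textbf{Step 1 (stoquasticity).} In this basis the ground state is automatically of the form \eqref{eq:defpsip} with $\theta\equiv 0$: the off-diagonal matrix elements of $-H$ in \eqref{QIM_H} are $b\langle\BS|\sum_uS^x_u|\BS'\rangle\ge0$, so for $\beta>0$ the semigroup $e^{-\beta H}$ is positivity preserving (and irreducible on connected $W$), and by Perron–Frobenius the unique finite-volume ground state has strictly positive amplitudes $\psi(\BS)>0$. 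Thus the phase part of the hypothesis is vacuous, and it remains only to control the decorrelation of the probability weights $p(\BS)=\psi(\BS)^2$.

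\textbf{Step 2 (decorrelation of $p$).} Here I would invoke the Lie–Trotter / Feynman–Kac path-integral representation, under which $p$ is the single-time marginal of a classical ferromagnetic Ising model on a space–time slab $W\times[0,\beta)$ (in the ground-state limit $\beta\to\infty$, on $W\times\RR$), and its two-point function is a fixed multiple of $\langle S^z_uS^z_v\rangle^{(W)}$. For $b>b_c(d)$ this pair correlation decays exponentially, uniformly in $W$, by \eqref{eq:shapness} together with the model-specific bounds of \cite{Bjornberg:2009aa, Ding:2023aa}, with a rate $\xi(b)>0$ that remains positive throughout the subcritical regime and may deteriorate only as $b\downarrow b_c(d)$. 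I would then upgrade exponential pair decay to the weak-mixing / conditional-decoupling bound demanded by the general criterion by exploiting the ferromagnetic structure — FKG, the Griffiths–Kelly–Sherman inequalities, and Lieb–Simon–Lieb-type bounds translate exponential decay of correlations into exponential decay of the influence of boundary configurations — and conclude that for rectangular $A$ the decoupling distance satisfies $l_0(A)\le C\ln|\partial A|$. Substituting into \eqref{eq:area} gives part (1), i.e. \eqref{eq:areaQIM}; the same input fed into the general mutual-information estimate gives part (2), i.e. \eqref{mutual_info_bound}, for disjoint rectangular $A_1,A_2$.

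\textbf{Step 3 (clustering).} Part (3) follows from part (2) and the quantum Pinsker inequality \eqref{eq:Pinsker}: for arbitrary bounded local observables this already yields exponential clustering, and for \eqref{eq:xexp} one takes $A_1=\{u\}$, $A_2=\{v\}$ (so that $\max\{|\partial A_1|,|\partial A_2|\}^\kappa=1$) and $O_{A_1}=S^x_u$, $O_{A_2}=S^x_v$ with $\|S^x\|=\tfrac12$, obtaining
\be
\left|\langle S^x_u;S^x_v\rangle^{(W)}\right|\le\tfrac14\sqrt{2(\ln2)\,I_\psi(\{u\}:\{v\})}\le C'\exp\!\left(-|u-v|/(2\eta)\right)
\ee
uniformly in $W$, so \eqref{eq:xexp} holds with $\xi(b)=2\eta$.

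\textbf{Main obstacle.} The crux is Step 2. The general hypothesis asks for a decorrelation statement about $p$ that is a priori stronger than plain two-point decay, whereas the available QIM inputs are precisely statements about $S^z$-$S^z$ pair correlations; bridging this gap requires converting pair decay of the (long-range) effective interaction of the time-zero marginal into the requisite mixing bound, uniformly in the volume $W$ and in the Trotter mesh, while tracking the dependence of the constants — hence of $C$, $\kappa$, $\eta$ — on $b$ as $b\downarrow b_c(d)$. This is where the ferromagnetic correlation inequalities must be applied with care; the remaining steps are the model-independent machinery of the earlier sections together with routine bookkeeping.
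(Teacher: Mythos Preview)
Your outline matches the paper's approach: stoquasticity via Perron--Frobenius, then verification of the conditional decoupling hypothesis for the $S^z$-probability $p$, then invocation of the general area-law corollary, the mutual-information theorem, and Pinsker. Steps~1 and~3 are correct and essentially verbatim what the paper does.

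The gap you yourself flag in Step~2 is real, and the tools you list (FKG, GKS, Lieb--Simon type bounds) do not by themselves close it. What the general criterion demands is a bound on the \emph{conditioned} truncated correlation $\langle\sigma_u;\sigma_v\rangle_{\BS_D}$ uniformly over all buffer configurations $\BS_D$ --- this is the kernel $K_B(u,v)$ of \eqref{eq:definflk} that feeds into Theorem~\ref{thm_stoQuastic}. FKG (via Strassen--Holley coupling) reduces $\delta_B(A\,|\,C)$ to a sum of such $K_B(u,v)$, but it does not tell you that the conditioned truncated correlation is bounded by the unconditioned one. That step is supplied by the Ding--Song--Sun inequality \cite{Ding:2023aa}, stated in the paper as \eqref{QIM_DSS}: $\langle S^z_u;S^z_v\rangle_{\BS_D}\le\langle S^z_uS^z_v\rangle$ for \emph{every} conditioning configuration $\BS_D$. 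This is the specific model input that converts two-point decay into the required uniform-in-conditioning bound; you cite \cite{Ding:2023aa} but only as a source for the exponential decay rate, not for this structural inequality. Lieb--Simon bounds go in a different direction (they control the two-point function by boundary sums, not conditioned correlators by unconditioned ones), and the paper's discussion in Section~\ref{sec:QIM} notes that the older ratio-weak-mixing route --- which is closer in spirit to what you sketch --- fails for $d>1$ precisely because the typical value of $p(\BS_A)$ is too small. So the chain is: FKG $\Rightarrow$ Theorem~\ref{thm_stoQuastic} $\Rightarrow$ bound in terms of $K_B$; Ding--Song--Sun $\Rightarrow$ $K_B\le\langle S^z_uS^z_v\rangle$; Griffiths $\Rightarrow$ finite volume $\le$ infinite volume; Bj\"ornberg--Grimmett sharpness $\Rightarrow$ exponential decay. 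Once you name \eqref{QIM_DSS} explicitly the argument is complete.
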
 

It should be noted that for $ d = 1 $ where $\ln |\partial A|$  is  constant,  \eqref{eq:areaQIM} yields a simple area law.  \\

The proof of Theorem~\ref{thm:EEQIM} is presented in Section~\ref{sec:QIM}.  The argument suggests that~\eqref{eq:areaQIM} may be valid also for the supercritical phase.  However, to confirm that our general criterion is   applicable also to $b<b_c$  
would require improvements in the currently available model-specific information.\\

\noindent{\textbf{Relations to previous results specific to QIM:} } 
The previous results on the entanglement specific to  this model 
were restricted to one dimension and there to significantly sub-critical states~\cite{Grimmett:2008aa,Grimmett:2020aa,Campanino:2020aa}.   
The broader Theorem~\ref{thm:EEQIM} addresses the entire subcritical regime, in all dimensions.
The difference is traced below to a key handicap in the approach previously taken. 

The bound on the mutual information and exponential clustering~\eqref{eq:xexp} extends a result of Crawford and Ioffe~\cite{Crawford:2010aa} that was derived using the model's random current representation.  It could potentially also be obtained using such methods.

\section{State approximations of decreased complexity}

A  known approach  towards bounds on the entanglement entropy $ S(\varrho_A(\psi)) $ of a specified pure state $ \ket{\psi} $  is to seek high-fidelity approximations of the state $\varrho_A(\psi)$ in $A$ in terms of states of low-complexity, e.g.~states of low rank -- comparable with the exponential of the reduced state's expected entropy~\cite{AKLV13,Aharonov:2014aa,AnshuArad22}.    Seeking such approximations, we present here two versions of approximations of $\varrho_A(\psi)$  invoking quantum versions of  conditioning on the state in a buffer $B$ of adjustable width,  enveloping the set $A$ and separating it from the much bigger $C=  W\setminus (A\sqcup B)$, cf.~Figure~\ref{fig:1}.  The first is expressed directly in terms of the reduced state $\varrho_A(\psi) $. The second is expressed in terms of  a Markovian approximation to the given state $ \ket{\psi} $.   

\begin{figure}[htb]
\begin{center}\includegraphics[width=.5\textwidth]{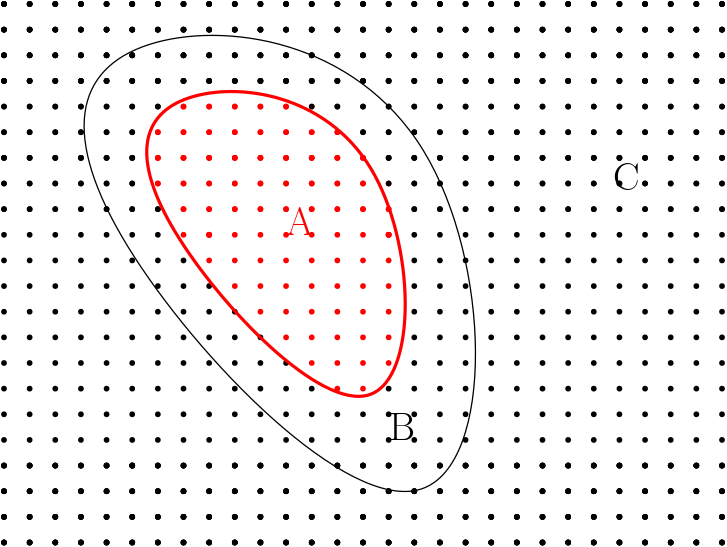}
\caption{Sketch of the decomposition of $ W \subset \mathbb{Z}^d $ into $ A $, a buffer $ B $, and the remaining set $ C $.}\label{fig:1} 
\end{center}\end{figure}

\subsection{Reduced state approximation}

Given a state vector $\ket \psi \in \mathcal H_W$ and a tripartition $W = A \sqcup B \sqcup C$ in which $B$ is a buffer separating $A$ from the larger region $C$, let $\{\ket{ \BS_B}\}$ be an  orthonormal basis for $\mathcal H_B$, e.g. labeled by spin configurations $ \BS_B \in \Omega_B $ from a convenient computational basis.
Let $\ket{\psi(\BS_B)} $ denote the normalized vector defined by 
\be 
\big[ \bb1_A \otimes  \ket{\BS_B} \bra{\BS_B} \otimes \bb1_C  \big]\, \ket {\psi}  =: \sqrt{p(\BS_B)}   \, \ket{\psi(\BS_B)}
\ee 
with $p(\BS_B) \in [0,1]$.  
In these terms, the state vector is $\ket{\psi} = \sum_{\BS_B}  \sqrt{p(\BS_B)}   \,\,  \ket{\psi(\BS_B)}$, and its  reduced state in $A$  can be presented as
\be  \label{mixed2} 
\varrho_A( \psi) = \;   \sum_{\BS_B} p(\BS_B) \  \varrho_A\left( \psi(\BS_B) \right)  
\quad\mbox{with}\quad   \varrho_A\left( \psi(\BS_B) \right) :=  \tr_{ \mathcal H_{B\sqcup C}}   \ket{\psi(\BS_B)}  \bra{\psi(\BS_B)} \, . \;
 \ee
To interpret this procedure, one may note that the restriction
of the reduced state of $\ket\psi$  in $ A $ is not changed if at first the state is pinched in $B$,
that is transformed  
\be \label{pinch}
 \ket\psi \bra \psi   \rightarrow  \sum_{\BS_B} p(\BS_B) \ \ket{\psi(\BS_B)} \bra{\psi(\BS_B)}   \, . 
\ee
The pinching, which can be viewed as the result of non-destructive measurement of $\BS_B$,
allows one to apply to the quantum systems the classical notion of conditioning.

The next observation is that while the rank of $  \varrho_A( \psi) $  generically reaches the lower of the dimensions $ \dim \mathcal H_A$ and $\dim \mathcal H_{BC} $, the density matrices  $\varrho_A\left( \psi(\BS_B)\right)$ may potentially be approximated by lower rank operators, being conditioned on specified states of the buffer. 
In particular, an approximation to $  \varrho_A( \psi)  $ of rank bounded by $\dim \mathcal H_{B} $  is given by
\begin{equation} \label{eq:reduced}
 \boxed{  \widehat{\varrho}_A\left(\psi\right) := \sum_{\BS_B} p(\BS_B) \, \ket{ \widehat{\phi}_A(\BS_B)}  \bra{ \widehat{\phi}_A(\BS_B)} \, , }
\end{equation}
where ${ \widehat \phi}_A(\BS_B) $ is a normalized eigenvector corresponding to the largest eigenvalue, $ \lambda(\BS_B) := \left\|  \varrho_A\left(\psi(\BS_B)\right) \right\|  $, of the density matrix in~\eqref{mixed2}. 

For the fidelity of this approximation, as measured by the trace-norm ($  \| \cdot  \|_1 $) distance, one has the  bound 
\begin{equation}\label{eq:genfidestimate}
 \left\| \varrho_A\left(\psi\right) - \widehat{\varrho}_A\left(\psi\right) \right\|_1 \leq 2 \ \sum_{\BS_B} p(\BS_B) \, \left( 1 - \lambda(\BS_B)  \right)  \, , 
\end{equation}
which follows from straightforward applications of the triangle inequality. 
Note that $ 1- \lambda(\BS_B) =  1- \left\|  \varrho_A\left(\psi(\BS_B)\right) \right\| $ measures the proximity of $ \varrho_A\left( \psi(\BS_B) \right) $ to a rank-one operator. If $ \ket{\psi(\BS_B)} \in \mathcal H_A\otimes \mathcal H_{BC} $ factorizes into a product state, then $ \lambda(\BS_B)  = 1 $.   
One may also explore more general approximations of the reduced state $\varrho_A\left( \psi \right)$ through 
sums over more than one of the leading  terms in  the  spectral  decomposition of  $\varrho_A\left( \psi(\BS_B) \right)$.\\



\subsection{Markovian state approximation}
For another approach, one may start from a convenient computational orthonormal basis over $\mathcal H_A\otimes \mathcal H_B\otimes \mathcal H_C$, indexed by spin configurations $(\BS_A,\BS_B,\BS_C)\in \Omega_A \times \Omega_B \times \Omega_C$, which represent the eigenvalues of commuting observables. 
Their joint probability distribution under a given state vector $ \ket{\psi} $ is 
$ 
p(\BS)  = |\psi(\BS)|^2 
$ 
which can  be written as 
\be 
p(\BS)  = 
 p(\BS_B) \, p(\BS_C | \BS_B) \, p(\BS_A | \BS_B, \BS_C)
\ee 
 in terms of the induced conditional probabilities  $p(\BS_A, \BS_C | \BS_B) = p(\BS_A, \BS_B, \BS_C) / p(\BS_B) $ and 
 $p(\BS_A | \BS_B) = \sum_{\BS_C} p(\BS_A, \BS_C | \BS_B) $.
%
Seeking a Markovian approximation of $\ket{\psi}$ it is natural to consider states $ \sum_{\BS}\sqrt{\widehat p(\BS) }\, e^{i \widehat\theta(\BS) } \ket{\BS} $ with the  
modified probability distribution 
\be
\mkern-100mu\mathrm{(I)} \mkern80mu  \widehat p(\BS_A, \BS_B,  \BS_C)  =  p(\BS_B) \  p(\BS_A \,|\,  \BS_B)  \ p(\BS_C \,|\,  \BS_B) \, , 
\ee
under which $\BS_A$ and $\BS_C$ retain their marginal distribution but are conditionally independent, 
and phase functions  of the form
\be
\mkern-160mu \mathrm{(II)}  \mkern80mu  \widehat\theta(\BS_A, \BS_B, \BS_C)  = \alpha(\BS_A, \BS_B)  + \gamma(\BS_C , \BS_B)
\ee
 with some  $\alpha: \Omega_A \times \Omega_B \to \mathbb{R} $ and $\gamma:  \Omega_C  \times \Omega_B \to \mathbb{R}$ which  may be variationally selected. 
With such a choice a markovian approximation takes the form
\be \label{eq:appr}
\boxed{\ket{\psi_{(B)}} := \sum_{\BS}\sqrt{p(\BS_B) }\, \ket{\phi_A(\BS_B)}\otimes \ket{\BS_B} \otimes \ket{\phi_C(\BS_B)} }
\ee 
with normalized vectors
\begin{align}\label{def:phivectots}
 \ket{\phi_A(\BS_B)} &:=  \sum_{\BS_A}  e^{i\alpha(\BS_A , \BS_B)} \sqrt{p(\BS_A \,|\, \BS_B)}\,   \ket{\BS_A}  \\
	  \ket{\phi_C(\BS_B)} &:=  \sum_{\BS_C}  e^{i\gamma(\BS_C , \BS_B)} \sqrt{p(\BS_C \,|\, \BS_B)}\,   \ket{\BS_C}  \, .  \notag
\end{align}
For the vector~\eqref{eq:appr} the state has the Markov property in the sense that  
conditioned on $\BS_B$ it factorizes into a product state over $\mathcal H_A \otimes \mathcal H_C$.
Its reduced state on $A$,
 \begin{equation}\label{reduced}
  \varrho_A(\psi_{(B)})=  \sum_{\BS_B} p(\BS_B) \  
 \ket{\phi_A(\BS_B)} \bra{\phi_A(\BS_B)} \, ,
 \end{equation}
has a rank bounded by $\dim \mathcal H_{B} $. 
Hence, in contrast to $\varrho_A(\psi)$,  the induced state   $\varrho_A(\psi_{(B)})$ is guaranteed to satisfy
\be \label{eq:compest} 
S\big(\varrho_A(\psi_{(B)})\big) \leq \ln\big( \nu^{|B|}\big) \propto  l \, | \ \partial A|  \, .
\ee

\noindent \textbf{Remark:}~~For concreteness sake, it may be noted that a simple way to split an additive phase as a sum of the above form is to  pick somewhat arbitrarily a configuration $\BE \in \Omega_W$ and set $\alpha$ and $\gamma$ as
\begin{equation}
\begin{split}
\alpha(\BS_A , \BS_B) &= \theta(\BS_A,\BS_B,\BE_C) - \theta(\BE_A,\BS_B,\BE_C)  \\
\gamma(\BS_C , \BS_B) &=  \theta(\BE_A,\BS_B,\BS_C)  \, . 
 \end{split}
\end{equation}
In a purely additive case, as in \eqref{add_phase}, the sum $\alpha+\gamma$ depends on $\BE$ only through the terms which link directly to both $\BS_A$ and $\BS_C$.  In the absence of such,  this injection of $\BE$ is for convenience only, its selection having no effect on the analysis.

\section{Fidelity bound for the Markovian state approximation} \label{sec:fidelity}

Quantification of the deviation of $\ket{\psi}$ from the idealized conditions listed above as (I) and (II), which hold for $\ket{\psi_{(B)}}$, will involve two terms. \\

\noindent (I) \quad
To address the degree of correlations in $p(\BS)$ between $\BS_A$ and $\BS_C$, when both are conditioned on the configuration $ \BS_B $ of the buffer, we use the following terminology and notation. 

Without  yet conditioning on $B$, we denote by $ \TV(A \, | \, C) $  the variational difference between  the probability distribution of  $(\BS_A, \BS_C)$  and the independent product of the 
corresponding marginals:
 \begin{align} \label{def_Delta}
 \TV(A \, | \, C) &:= \frac 12 \sum_{\BS_A, \BS_C} \big| p(\BS_A, \BS_C) - p(\BS_A) p(\BS_C) \big|  \\ 
 &=  \sum_{\BS_C} p(\BS_C) \sum_{\BS_A}\big[ p(\BS_A | \BS_C) - p(\BS_A) \big]_+  \notag    \, . 
 \end{align}    
with $\big[ \cdot \big]_+$ denoting the positive part.  
Of more direct relevance for our analysis is  the following averaged conditional version of the above:
\begin{align}\label{def_cond_B}
\TV_B(A \, | \,  C)  
&:= \sum_{\BS_B, \BS_C} p(\BS_B,\BS_C)  \sum_{ \boldsymbol{\sigma}_A}  \left[  p(\boldsymbol{\sigma}_A | \BS_C \BS_B)- p(\boldsymbol{\sigma}_A  | \boldsymbol{\sigma}_B) \right]_+ \!   \notag \\ 
&\equiv \sum_{\BS_B} p(\BS_B) \, \TV_{\BS_B}(A \, | \, C)    
\end{align}
where $\TV_{\BS_B}$ refers to the $\TV(A \, | \,  C)$-measure of correlation in the joint distribution of $(\BS_A,  \BS_C)$ conditioned on the specified value of $\BS_B$. 

These definitions extend naturally to mutually disjoint $ A, B , C \subset W $, whose union does not exhaust $ W $. 
Section~\ref{sec:delta} includes a more detailed discussion of this quantity  as well as estimates in specific models.   One may briefly note that for classical Gibbs equilibrium measures~\eqref{eq:Gibbs1}  under local interactions  $\TV_B(A\, | \, C) = 0$ once the width of the buffer $ B $  exceed the interaction's range.   That  does not apply to quantum ground states whose Hamiltonians include non-commuting terms, even in the stoquastic case.   
Still, one may expect that in non-critical states, the buffer-conditioned correlations  would be weak, at least on average.\\

\noindent (II) \quad  For  an estimate on the degree of deviation from the second condition  
we denote 
\begin{equation}\label{def:K}
 \vartheta_B(A|C) :=    \inf_{\alpha, \gamma} \ \left \{  \sum_{\BS} p(\BS)   \left| 1 -  U_{\alpha,\gamma}(\BS)  \right|^2 \right\}^{\frac{1}{2}}    \, , 
 \end{equation}
where  
$$
 U_{\alpha,\gamma}(\BS) := e^{-i\theta(\BS)}
\exp\left( i \left[ \alpha(\BS_A , \BS_B) \ +\ \gamma(\BS_C , \BS_B) \right] \right) 
$$
and the optimization is over  functions $ \alpha : \Omega_A \times \Omega_B \to[0,2\pi] $ and 
$\gamma : \Omega_C  \times \Omega_B \to [0,2\pi]$. \\
Since $ \Omega_A , \Omega_B , \Omega_C $ are finite sets, $ [0,2\pi] $ is compact and  the right side of~\eqref{def:K} is continuous in the values of these functions, the infimum is a minimum. Hence, there exist optimizing functions $ \alpha_{AB}, \gamma_{AC} $ for the infimum.

Let us note that for states $ \ket{\psi} $, whose phase is an additive function of the form \eqref{add_phase}, the term $\vartheta_B(A | C)$ vanishes  if the buffer's width exceeds the range of the kernel $\widehat J $.  More generally, $\vartheta_B(A | C)$ should be small in case the phase $\theta(\BS) $ is presentable as a sum of quasi local terms similar to~\eqref{add_phase}, once the buffer's width exceeds the relevant coherence length by a factor of order $ \ln |\partial A|$.  \\

It will be useful to consider also an extension of the above to 
the case $ A $ is a disjoint union of  connected components $ A_1, \dots , A_n $,  separated by disjoint buffers $ B_1, \dots , B_n $ from  $C := W\setminus (A \sqcup  B)$ (at $B = \sqcup_{j=1}^n B_j $), as depicted in~Figure~\ref{fig:2} for $ n = 2 $.  In such case, we denote
\begin{equation}\label{eq:K2}
 \vartheta_{B_1,\dots B_n}(A_1, \dots, A_n  |C) :=    \inf_{\alpha_1 , \dots , \alpha_n, \gamma} \ \left \{  \sum_{\BS} p(\BS)   \left| 1 -  U_{(\alpha_j),\gamma}(\BS)  \right|^2 \right\}^{\frac{1}{2}}    \, , 
\end{equation} 
where the infimum is now over $ \alpha_j : \Omega_{A_j} \times \Omega_{B_j} \to[0,2\pi] $, $ j \in \{1,\dots, n \} $, and 
$\gamma : \Omega_C  \times \Omega_B \to [0,2\pi]$, and
$$
U_{(\alpha_j),\gamma}(\BS) :=  e^{-i\theta(\BS)}
\exp\left( i \left[ \sum_{j=1}^n\alpha_j(\BS_{A_j} , \BS_{B_j}) \ +\ \gamma(\BS_C , \BS_B) \right] \right) \, . 
$$
Since optimizers $ \alpha_1,\dots , \alpha_n ,\gamma $ of~\eqref{eq:K2}, yield bounds by choosing $ \alpha =\sum_{j=1}^n  \alpha_j $ and $ \gamma $ in~\eqref{def:K}, we have   $ \vartheta_B(A|C) \leq  \vartheta_{B_1,\dots B_n}(A_1, \dots, A_n  |C) $. \\

The above quantities will next be used to estimate the degree to which $\varrho_A(\psi_{(B)})$ approximates the state of our original interest $\varrho_A(\psi)$.  Their trace-norm distance relates to the fidelity as measured by the progenitor state's global overlap (somewhat in the spirit of Uhlmann's bound~\cite{Uhlmann:1970}):
\begin{align}\label{eq:FvG} 
\| \varrho_A(\psi) - \varrho_A(\psi_{(B)}) \|_1 & = \sup_{\| O_A \| \leq 1 } \left| \bra{\psi} O_A \otimes \bb{1} \ket{\psi} - \bra{\psi_{(B)}} O_A \otimes \bb{1} \ket{\psi_{(B)}} \right|  \notag \\
& \leq 2 \left\| \psi - \psi_{(B)} \right\| \leq 2 \sqrt{ 2 ( 1 - \Re \braket{\psi}{\psi_{(B)}} ) } \, . 
\end{align}
%
%
Our main estimate of the fidelity of the approximation of $\rho_A(\psi)$ by $\rho_A(\psi_{(B)})$ at different choices of the buffer $B$ is the following result.
\begin{theorem}[Fidelity bound] \label{thm:fidcor} 
For any  state vector $\ket{\psi}$ and its  approximation $\ket{\psi_{(B)}}$ defined by~\eqref{eq:appr} with optimizers $ \alpha = \alpha_{AB} $, $ \gamma = \gamma_{CB} $ from~\eqref{def:K}: 
\begin{equation}\label{eq:fidcor}
\Big[\tfrac{1}{2} \| \varrho_A(\psi) - \varrho_A(\psi_{(B)}) \|_1\Big]^2 \leq 2  \left|1- \braket{\psi}{\psi_{(B)}} \right| \leq 
2\   \TV_B(A \, | \,  C)  + 2\ \vartheta_B(A|C) \, . 
\end{equation}
\end{theorem}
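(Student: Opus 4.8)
The left inequality of \eqref{eq:fidcor} is not new: it is \eqref{eq:FvG} squared, combined with $1-\Re\braket{\psi}{\psi_{(B)}}=\Re\big(1-\braket{\psi}{\psi_{(B)}}\big)\le\big|1-\braket{\psi}{\psi_{(B)}}\big|$. So the plan is to establish the right inequality, i.e.\ $\big|1-\braket{\psi}{\psi_{(B)}}\big|\le\TV_B(A\,|\,C)+\vartheta_B(A\,|\,C)$.

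The first step is to evaluate the overlap in the computational basis. Writing $\BS=(\BS_A,\BS_B,\BS_C)$ and unfolding \eqref{eq:appr}--\eqref{def:phivectots}, the approximant is $\ket{\psi_{(B)}}=\sum_\BS e^{i(\alpha(\BS_A,\BS_B)+\gamma(\BS_C,\BS_B))}\sqrt{\widehat p(\BS)}\,\ket{\BS}$ with the conditionally independent law $\widehat p(\BS)=p(\BS_B)\,p(\BS_A\,|\,\BS_B)\,p(\BS_C\,|\,\BS_B)$, so that
\[
\braket{\psi}{\psi_{(B)}}=\sum_\BS\sqrt{p(\BS)\,\widehat p(\BS)}\;U_{\alpha,\gamma}(\BS),
\]
with $U_{\alpha,\gamma}$ exactly the unimodular function from \eqref{def:K}, evaluated at the chosen optimizers $\alpha=\alpha_{AB}$, $\gamma=\gamma_{CB}$. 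I would then split the deficit into a probability part and a phase part,
\[
1-\braket{\psi}{\psi_{(B)}}=\underbrace{\sum_\BS\big(p(\BS)-\sqrt{p(\BS)\widehat p(\BS)}\big)}_{T_1}\;+\;\underbrace{\sum_\BS\sqrt{p(\BS)\widehat p(\BS)}\,\big(1-U_{\alpha,\gamma}(\BS)\big)}_{T_2},
\]
and bound the two terms separately; the triangle inequality and the overall factor $2$ then finish the proof.

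For $T_1$: since $\sum_\BS p=\sum_\BS\widehat p=1$ one has $T_1=1-\sum_\BS\sqrt{p\widehat p}=\tfrac12\sum_\BS\big(\sqrt{p(\BS)}-\sqrt{\widehat p(\BS)}\big)^2\ge0$, and the elementary bound $\big(\sqrt p-\sqrt{\widehat p}\big)^2\le\big|\sqrt p-\sqrt{\widehat p}\big|\,\big(\sqrt p+\sqrt{\widehat p}\big)=|p-\widehat p|$ gives $T_1\le\tfrac12\sum_\BS|p(\BS)-\widehat p(\BS)|$. The one bookkeeping identity needed is that this last quantity equals $\TV_B(A\,|\,C)$: inserting $p(\BS)=p(\BS_B)p(\BS_C|\BS_B)p(\BS_A|\BS_B\BS_C)$ and $\widehat p(\BS)=p(\BS_B)p(\BS_C|\BS_B)p(\BS_A|\BS_B)$ and using $\tfrac12\sum_{\BS_A}|\mu-\nu|=\sum_{\BS_A}[\mu-\nu]_+$ for probability vectors $\mu,\nu$ on $\Omega_A$ reproduces the definition \eqref{def_cond_B} line for line. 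For $T_2$: distribute the weight as $\sqrt{p\widehat p}=\sqrt{\widehat p}\cdot\sqrt p$ and apply Cauchy--Schwarz,
\[
|T_2|\le\sum_\BS\sqrt{\widehat p(\BS)}\Big(\sqrt{p(\BS)}\,\big|1-U_{\alpha,\gamma}(\BS)\big|\Big)\le\Big(\sum_\BS\widehat p(\BS)\Big)^{1/2}\Big(\sum_\BS p(\BS)\,\big|1-U_{\alpha,\gamma}(\BS)\big|^2\Big)^{1/2}=\vartheta_B(A\,|\,C),
\]
the last equality because $\sum_\BS\widehat p=1$ and $\alpha_{AB},\gamma_{CB}$ are by construction the minimizers in \eqref{def:K}. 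Adding the two bounds gives the claim, and the very same split handles the multi-component version with $\vartheta_{B_1,\dots,B_n}$ in place of $\vartheta_B$.

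The point to get right — more a choice than an obstacle — is the split itself: routing the estimate through the intermediate vector $\sum_\BS e^{i\theta(\BS)}\sqrt{\widehat p(\BS)}\,\ket{\BS}$ would produce the phase term with $\widehat p$-weights rather than the $p$-weights of \eqref{def:K}, hence a weaker bound, whereas loading all of $\sqrt p$ onto the $\big|1-U_{\alpha,\gamma}\big|$ factor in $T_2$ is exactly what matches $\vartheta_B(A\,|\,C)$ as defined. Everything else is the two one-line inequalities above plus the identity that $\TV_B(A\,|\,C)$ equals the total variation distance $\tfrac12\sum_\BS|p(\BS)-\widehat p(\BS)|$, so I do not anticipate a genuine difficulty.
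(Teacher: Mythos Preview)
Your proof is correct and follows essentially the same route as the paper: the identical split of $1-\braket{\psi}{\psi_{(B)}}$ into a probability part $T_1$ and a phase part $T_2$, with the same Cauchy--Schwarz argument for $T_2$. The only cosmetic difference is in bounding $T_1$: you recognize it as the squared Hellinger distance $\tfrac12\sum(\sqrt p-\sqrt{\widehat p})^2\le\tfrac12\sum|p-\widehat p|$, whereas the paper uses the pointwise inequality $-[b-a]_-\le\sqrt b(\sqrt b-\sqrt a)\le[b-a]_+$ together with equality of the positive and negative parts; both arrive at $\TV_B(A\,|\,C)$.
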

\begin{proof} 
By  \eqref{eq:FvG} it suffices to prove the second inequality.  
Expressing  $\psi(\BS) = \sqrt{p(\BS)} \exp\left(i \theta(\BS)\right) $ and $ p(\BS)  = p(\BS_A | \BS_C \BS_B  ) \,\,p(\boldsymbol{\sigma}_C  | \boldsymbol{\sigma}_B)\,\,  p(\boldsymbol{\sigma}_B)$  in terms of conditional probabilities, we have 
\begin{align}
1- \braket{\psi}{\psi_{(B)}}  = & \ \sum_{\BS} p(\BS_B, \BS_C ) \sqrt{p(\BS_A|\BS_B,\BS_C)} \notag  \\
& \qquad \times   \left(  \sqrt{p(\BS_A|\BS_B,\BS_C)}  - U_{\alpha,\gamma}(\BS)\sqrt{p(\BS_A | \BS_B)} \right) \, .
\end{align}
 Writing $ U_{\alpha,\gamma} = 1 + (U_{\alpha,\gamma}-1) $, we split the sum into two terms. 
The absolute value of the part, which is free of $ U_{\alpha,\gamma} $, is 
 \begin{multline}
 \Big| \sum_{\BS} p(\BS_B, \BS_C ) \sqrt{p(\BS_A|\BS_B,\BS_C)} \left( \sqrt{p(\BS_A|\BS_B,\BS_C)}  - \sqrt{p(\BS_A | \BS_B)} \right) \Big| \\
 \leq   \frac{1}{2} \sum_{\BS} p(\BS_B, \BS_C ) \left| p(\BS_A|\BS_B,\BS_C)  - p(\BS_A | \BS_B) \right|   = \delta_B(A | C) \, .  
 \end{multline}
 The inequality is based on the estimates $ - [ b - a ]_- \leq  \sqrt{b} \ (\sqrt{b} - \sqrt{a} ) \leq [ b - a ]_+$ for  $ a,b \geq 0 $, together with the fact that the positive and negative part of the sum are equal.  This yields the first term in the right side of~\eqref{eq:fidcor}.
 
 The absolute value of the second part is estimated with the help of the triangle inequality and the Cauchy-Schwarz inequality for the sum:
  \begin{align}
  & \Big|  \sum_{\BS} p(\BS_B, \BS_C ) \sqrt{p(\BS_A|\BS_B,\BS_C) \ p(\BS_A | \BS_B)}  \left( U_{\alpha,\gamma}(\BS) - 1 \right) \Big| \notag \\
  & \leq  \sum_{\BS} p(\BS_B, \BS_C ) \sqrt{p(\BS_A|\BS_B,\BS_C) \ p(\BS_A | \BS_B)}  \left| U_{\alpha,\gamma}(\BS) - 1 \right| \notag \\
  & \leq  \Big( \sum_{\BS} p(\BS)  \left| U_{\alpha,\gamma}(\BS) - 1 \right|^2\Big)^{1/2} \Big( \sum_{\BS} p(\BS_B, \BS_C ) p(\BS_A | \BS_B)   \Big)^{1/2} \ = \ \vartheta_B(A|C) \,.
 \end{align}
This yields the second term on the right side of~\eqref{eq:fidcor}. 
 \end{proof}

\noindent \textbf{Remark:}~~Alternatively, one could have arrived at~\eqref{eq:fidcor} by starting from~\eqref{eq:genfidestimate} using the bound $p(\BS_B)\ \lambda(\BS_B) \geq \left|  \braket{\phi_A(\BS_B) \otimes \BS_B \otimes \phi_C(\BS_B)}{ \psi}  \right|^2 $ with the vectors from~\eqref{def:phivectots}. Applying the estimates in the proof of  Theorem~\ref{thm:fidcor} to the scalar product, one then concludes the outer estimate in~\eqref{eq:fidcor}.

\section{Entropic implications of rapid decorrelation}\label{sec:entropy}

 In this section, we relate the fidelity lower bounds formulated above with the following entropy estimates: 
 \begin{enumerate}[i.]
 \item An area-law bound corrected by $\ln |A|$ on the entanglement entropy in pure states of rapid decorrelation in  dimensions $d\geq 1$,  
 \item\label{label2}  An improved estimate in which  the log correction is replaced by $\ln |\partial A|$.  
\item \label{label3} A related upper bound for the difference in entanglement entropy between $ \ket{\psi} $ and $\ket{\psi_{(B)}} $.
\end{enumerate} 
The difference between $\ln |A|$ and $\ln |\partial A|$ accomplished in~\ref{label2}  is of particular value for one dimension, where the latter is just a constant.  It also affects~\ref{label3}.  

Throughout the discussion 
we   restrict attention to sets $ A $ which are regular in the following sense.
\begin{definition}\label{def:regularA}
We call $ A \subset W $  \emph{regular}  if the volume of a buffer $ B_l := \{ u \in W \backslash A  \ | \ d(u,A) \leq l \} $ of width  $ 1 \leq l \leq  L(A) :=  | A|\slash (c_d\ |\partial A| )$ is bounded by the surface area of $ A $ times $l$, i.e.
\begin{equation}\label{eq:Buffervol}
c_d  \  l \  | \partial A | \leq   \left|  B_l \right| \leq C_d \  l \  | \partial A | 
\end{equation}
with some dimension dependent $c_d, C_d \in (0,\infty) $.
\end{definition}

\subsection{An entanglement bound based on a single buffer estimate}

In the single scale version of our bound, we approximate the state $\varrho_A(\psi)$ by $\varrho_A(\psi_{(B)})$, that is, the reduced state of the \emph {conditionally decorrelated vector}~\eqref{eq:appr}. 
The entropy of $\varrho_A(\psi_{(B)})$ is upper bounded by through its Schmidt norm: 
\be  \label{eq:Schmidt}
S(\varrho_A(\psi_{(B)})) \leq |B|  \ln \nu \, .
\ee 

A bound on the difference of the two entropies is  provided by the following variant of Fannes' continuity bound~\cite{Fannes:1973aa,Audenaert:2007aa}.  
 \begin{equation}\label{eq:Fannes}
\left| S(\varrho_A) - S(\widehat \varrho_A)\right| \leq  \tfrac{1}{2} \left\| \varrho_A -  \widehat\varrho_A \right\|_1  \left(1+  \ln \frac{2 \ \rank (\varrho_A - \widehat \varrho_A) }{ \left\| \varrho_A -  \widehat\varrho_A \right\|_1 }  \right) \, . 
\end{equation}
For the reader's convenience, a proof is included here in Appendix~\ref{App:Fannes}. 

Under this relation, closeness in entropy requires proximity of the states in the trace norm to a degree which is affected by rank  considerations.   
For concreteness sake let us consider the case where the following condition is met.   (An example of that are the ground states of the quantum Ising model throughout its subcritical phase).  
\begin{definition} \label{def:expdecay1}
A state  $ \ket{\psi}  $ is said to exhibit \emph{exponentially fast  conditional decoupling}, at correlation length $\xi > 0$, if for any regular $ A \subset W $ and buffer $ B_l = \{ u \in W \backslash A  \ | \ d(u,A) \leq l \} $  of width $l$:
 \begin{equation}\label{ass:expdec0}
 \TV_{B_l}(A \, |\, C_l )  + \vartheta_{B_l}(A |  C_l)  \leq   C_\xi \ L(A)^{2\alpha} \ e^{-l/\xi}  
\end{equation}
with  $C_l= W\setminus (A\cup B_l)$, $ C_\xi , \alpha\in (0,\infty) $ and $L(A)$ the length of $A$ from Definition~\ref{def:regularA}. 
\end{definition} 

In that case, the combination of \eqref{eq:Schmidt} and \eqref{eq:Fannes} yields: 

\begin{theorem} \label{thm:singlescale}
For a  state $ \ket{\psi} $, with exponentially fast  conditional decoupling, the entanglement entropy corresponding to any regular $ A \subset W $ with $L(A)\geq 2$ satisfies
\be \label{eq:single}   
S(\varrho_A(\psi)) \leq  C_{\xi, d } \,  |\partial A|    \ln L(A)
\ee 
at some constat $ C_{\xi, d } < \infty $ which is independent of $ A $. 
\end{theorem}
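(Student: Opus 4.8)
The plan is to compare $\varrho_A(\psi)$ with the reduced state $\varrho_A(\psi_{(B_{l})})$ of the conditionally decorrelated vector~\eqref{eq:appr} (built with the optimizers of~\eqref{def:K}), for a buffer $B_{l}$ whose width $l$ grows only \emph{logarithmically} in $L(A)$, and to bound the resulting entropy gap by feeding the fidelity estimate of Theorem~\ref{thm:fidcor} into the Fannes-type bound~\eqref{eq:Fannes}. Concretely, for a width $l_0$ to be fixed below one writes
\be\label{eq:plansplit}
S(\varrho_A(\psi))\;\le\; S\big(\varrho_A(\psi_{(B_{l_0})})\big)\;+\;\big|\,S(\varrho_A(\psi))-S\big(\varrho_A(\psi_{(B_{l_0})})\big)\,\big|\,.
\ee
The first term is controlled by the Schmidt estimate~\eqref{eq:Schmidt} together with regularity of $A$ (Definition~\ref{def:regularA}): $S(\varrho_A(\psi_{(B_{l_0})}))\le |B_{l_0}|\ln\nu\le C_d\,l_0\,|\partial A|\ln\nu$. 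With $l_0$ of order $\xi\ln L(A)$ this already produces the right-hand side of~\eqref{eq:single}, the ``$+1$'' of a ceiling function being absorbed via $\ln L(A)\ge\ln2$.

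For the second term of~\eqref{eq:plansplit}, put $D_{l}:=\|\varrho_A(\psi)-\varrho_A(\psi_{(B_{l})})\|_1$. Theorem~\ref{thm:fidcor} together with the exponential conditional decoupling hypothesis~\eqref{ass:expdec0} gives
\be
\tfrac12 D_{l}\;\le\;\big(2\,\TV_{B_{l}}(A\,|\,C_{l})+2\,\vartheta_{B_{l}}(A\,|\,C_{l})\big)^{1/2}\;\le\;\sqrt{2C_\xi}\;L(A)^{\alpha}\,e^{-l/(2\xi)}\;=:\;\bar\delta_{l}\,.
\ee
Since $\varrho_A(\psi)-\varrho_A(\psi_{(B_{l})})$ acts on $\mathcal H_A$, its rank is at most $\dim\mathcal H_A=\nu^{|A|}$, so the logarithm appearing in~\eqref{eq:Fannes} is controlled only by the \emph{volume}, $\ln\rank\le|A|\ln\nu=c_d\,L(A)\,|\partial A|\ln\nu$ -- far above the target area order. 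The point of the argument is that this large logarithm enters~\eqref{eq:Fannes} multiplied by the exponentially small $D_{l}$: using that $t\mapsto t\,(1+\ln(R/t))$ is increasing on $(0,R)$ in the value $t$ and monotone in $R$, one obtains, for $l=l_0$ and $\bar\delta_{l_0}<\nu^{|A|}$,
\be
\big|\,S(\varrho_A(\psi))-S\big(\varrho_A(\psi_{(B_{l_0})})\big)\,\big|\;\le\;\bar\delta_{l_0}\Big(1+\ln\tfrac{\nu^{|A|}}{\bar\delta_{l_0}}\Big)\;\le\;\bar\delta_{l_0}\Big(C+|A|\ln\nu+\tfrac{l_0}{2\xi}\Big)\,.
\ee
Choosing $l_0:=\lceil\,2\xi(\alpha+2)\ln L(A)\,\rceil$ makes $\bar\delta_{l_0}\le\sqrt{2C_\xi}\,L(A)^{-2}$, whence $\bar\delta_{l_0}\,|A|\ln\nu\lesssim |\partial A|/L(A)\le|\partial A|$ (using $L(A)\ge2$) and the remaining contributions are $o(1)$; thus the second term of~\eqref{eq:plansplit} is bounded by a constant depending on $\xi,\alpha,C_\xi,d$ times $|\partial A|\le(\ln2)^{-1}|\partial A|\ln L(A)$. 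Combining with the first term yields~\eqref{eq:single} for every regular $A$ with $L(A)$ above a threshold $L^\ast(\xi,\alpha,C_\xi,d)$, the threshold ensuring both that $l_0\le L(A)$ (so $B_{l_0}$ is an admissible buffer) and that $\bar\delta_{l_0}<\nu^{|A|}$.

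The remaining range $2\le L(A)<L^\ast$ is dispatched by the trivial volume bound $S(\varrho_A(\psi))\le|A|\ln\nu=c_d\,L(A)\,|\partial A|\ln\nu\le c_d L^\ast\ln\nu\,|\partial A|\le\big(c_d L^\ast\ln\nu/\ln2\big)\,|\partial A|\ln L(A)$, after which one enlarges $C_{\xi,d}$ to dominate this constant as well. The one genuinely delicate point is the balance implicit in the choice of $l_0$: the buffer must be wide enough on the logarithmic scale for the exponential decay in~\eqref{ass:expdec0} to beat the volume-order $\ln\rank$ in Fannes' inequality, yet narrow enough that the Schmidt entropy $|B_{l_0}|\ln\nu$ contributes only the stated $\ln L(A)$ correction to the area law; with $l_0$ so chosen, everything else is routine bookkeeping with the triangle inequality and the monotonicity of $t\mapsto t\ln(1/t)$ near the origin.
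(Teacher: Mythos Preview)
Your proof is correct and follows essentially the same route as the paper's: split $S(\varrho_A(\psi))$ via the comparison state $\varrho_A(\psi_{(B_l)})$, bound the first piece by the Schmidt estimate~\eqref{eq:Schmidt} and regularity, and control the entropy gap by feeding the fidelity bound of Theorem~\ref{thm:fidcor} into the Fannes-type inequality~\eqref{eq:Fannes}, with the buffer width chosen as a suitable multiple of $\xi\ln L(A)$ so that the exponential decay in~\eqref{ass:expdec0} kills the volume-order $\ln\rank$. Your treatment is in fact somewhat more careful than the paper's sketch, in that you explicitly verify the admissibility constraint $l_0\le L(A)$ on the buffer and dispose of the finite range $2\le L(A)<L^\ast$ by the trivial volume bound.
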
 
\begin{proof}
Aiming for an approximation of relatively low rank and high fidelity, let us set the buffer's width to  
$ l = 2(1+\alpha') \xi \ln L (A) $
with  $\alpha' > \alpha$.  
For this choice 
$
S(\varrho_A(\psi_{(B_l)})) \  \leq \  |B_l|   \log \nu  \leq   C_d\ l\ |\partial A| \ln \nu \,  $, 
by~\eqref{eq:Buffervol} and thus the claimed bound holds for the approximating state. 

Arranging the terms of \eqref{eq:Fannes} by order of magnitude, we get 
\begin{eqnarray} 
S(\varrho_A(\psi)) -  S(\varrho_A(\psi_{(B_l)}))  \ \leq \    \tfrac{1 }{2} \Delta_l^{(1)}   |A|  \ln \nu   +  O(1)   , 
\end{eqnarray}
with  $ \Delta_l^{(1)} := \left\| \varrho_A(\Psi) -  \varrho_A(\Psi_B)  \right\|_1 $
and the term $O(1)$ is the collection of terms 
$ \tfrac{1}{2}\Delta_l^{(1)}  \big[  1+ \ln\big( 2 /\Delta_l^{(1)} \big) \big] $.
Under  the assumption  \eqref{ass:expdec0} the bound \eqref{eq:fidcor} 
yields
\be 
\Delta_l^{(1)}   \ |A|  \leq   C_\xi^{1/2} c_d \ L(A)^{\alpha -\alpha'}   |\partial A| 
\ee
which is negligible for $\alpha' > \alpha $.  Hence, under the stated condition  the difference in entropies is insignificant. 
\end{proof} 

Reviewing the argument one may note that the above bound is suboptimal in that measuring the entropy difference in terms of $\| \varrho_A(\psi) -  \varrho_A(\psi_{(B)})  \|_1 $ ignores the fact that the states' disparity may be significant mainly in areas close to the boundary of $A$.  

We next derive an improved upper bound in which $\ln |A|$ is replaced by $\ln |\partial A|$,  derived through a multiscale analysis.   The correction is particularly significant for one dimension, in which case $ |\partial A|$ is of order $O(1)$ and hence the  log term in \eqref{eq:single} may be dropped. 

\subsection{An improved multiscale analysis}

Inspecting  \eqref{eq:vonNeumann}  one may note that a state's entropy $S(\rho_A)$ 
may be strongly affected by the eigenvalues   at the low end of $\varrho_A$'s  spectrum.  
Taking a cue from~\cite{Aharonov:2011aa, Brandao:2015ab, AnshuArad22}, the following entanglement estimate is derived by decomposing the entropy $S(\varrho_A)$ into a sum of contributions from different spectral regions, and employing buffers $ B_l $ of increasing width $ l \in \mathbb{N} $  
 for probing the contributions from the spectrum's lower scales.  
 
The analysis is conveniently quantified by the  distribution-type function
\begin{equation}\label{eq:distribution}
 \tau_{\varrho_A}(N) := \sum_{j > N} \lambda^\downarrow_j(\varrho_A) \, ,
\end{equation}
where the sum is over the eigenvalues of $\varrho_A$ labeled  in decreasing order.
Of particular relevance is the rate at which $\tau_{\varrho_A}(N)$  vanishes as $N$ increases toward
the dimension of the relevant Hilbert space. 
The following main lemma
estimates this quantity using 
the fidelity estimate~\eqref{eq:fidcor}. 
\begin{lemma}  Given a state $\ket{\psi}$,  for any $A\subset W$ and a buffer set $B$,   
 the spectral distribution function of $\varrho_A \equiv \varrho_A(\psi)$  satisfies 
\begin{equation}\label{eq:massTV}
\tau_{\varrho_A}\big(\nu^{|B|} \big) \leq 2 \ \TV_{B}(A \, |\, W \backslash ( A \sqcup B) ) + 2 \ \vartheta_B(A \, |\, W \backslash ( A \sqcup B) )  \,.
\end{equation}
\end{lemma}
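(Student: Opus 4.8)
The plan is to derive the bound from a single, explicitly chosen low-rank subspace. Recall first the (Ky Fan) variational description of the distribution function: since $\varrho_A$ is a density matrix,
\[
\tau_{\varrho_A}(N) \;=\; 1 - \max\big\{ \tr(P\,\varrho_A) : P \text{ an orthogonal projection with } \rank P \le N \big\},
\]
so that $\tau_{\varrho_A}(N) \le \tr\big[(\bb1 - P)\,\varrho_A\big]$ for \emph{any} projection $P$ with $\rank P \le N$. I would apply this with $N = \nu^{|B|} = |\Omega_B|$ and with $P$ the orthogonal projection of $\mathcal H_A$ onto $V := \mathrm{span}\{\, \ket{\phi_A(\BS_B)} : \BS_B \in \Omega_B \,\}$, where $\ket{\phi_A(\BS_B)}$ and $\ket{\phi_C(\BS_B)}$ are the normalized vectors of~\eqref{def:phivectots} built with the minimizers $\alpha = \alpha_{AB}$, $\gamma = \gamma_{CB}$ of~\eqref{def:K}. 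Then $\rank P \le |\Omega_B| = \nu^{|B|}$, and since $\varrho_A = \tr_{B\sqcup C}\ket{\psi}\bra{\psi}$ one gets $\tr[(\bb1-P)\varrho_A] = 1 - \bra{\psi}\big(P \otimes \bb1_{B\sqcup C}\big)\ket{\psi}$.

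The crucial observation is that the conditionally decorrelated vector $\ket{\psi_{(B)}}$ of~\eqref{eq:appr} lies in the range of $P \otimes \bb1_{B\sqcup C}$: it is a superposition of the vectors $\ket{\phi_A(\BS_B)} \otimes \ket{\BS_B} \otimes \ket{\phi_C(\BS_B)}$, each of which sits in $V \otimes \mathcal H_B \otimes \mathcal H_C = \mathrm{range}(P \otimes \bb1_{B\sqcup C})$. Being a unit vector in that range, $\ket{\psi_{(B)}}\bra{\psi_{(B)}} \le P \otimes \bb1_{B\sqcup C}$, and hence $\bra{\psi}(P\otimes\bb1_{B\sqcup C})\ket{\psi} \ge |\braket{\psi}{\psi_{(B)}}|^2$. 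Combining with the previous step, $\tau_{\varrho_A}(\nu^{|B|}) \le 1 - |\braket{\psi}{\psi_{(B)}}|^2$. Since $|\braket{\psi}{\psi_{(B)}}| \le 1$ and $\Re z \le |z|$, the elementary bound $1 - |z|^2 \le 2(1 - |z|) \le 2(1 - \Re z)$ with $z = \braket{\psi}{\psi_{(B)}}$ gives $\tau_{\varrho_A}(\nu^{|B|}) \le 2\,|1 - \braket{\psi}{\psi_{(B)}}|$, and the right-hand side is at most $2\,\TV_B(A \,|\, W\setminus(A\sqcup B)) + 2\,\vartheta_B(A \,|\, W\setminus(A\sqcup B))$ by Theorem~\ref{thm:fidcor}. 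This is the asserted inequality.

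The main obstacle is securing the \emph{linear} dependence on $\TV_B + \vartheta_B$ rather than its square root. The naive route --- treating $\varrho_A(\psi_{(B)})$ as a rank-$\nu^{|B|}$ approximant of $\varrho_A$ and combining the Eckart--Young estimate $\tau_{\varrho_A}(\nu^{|B|}) \le \|\varrho_A - \varrho_A(\psi_{(B)})\|_1$ with~\eqref{eq:fidcor} --- only delivers $\tau_{\varrho_A}(\nu^{|B|}) \le 2\sqrt{2\TV_B + 2\vartheta_B}$. The gain comes from arguing at the level of state vectors: the truncation error is controlled by the overlap defect $1 - |\braket{\psi}{\psi_{(B)}}|^2$, which is converted by $1 - |z|^2 \le 2(1 - \Re z)$ into the quantity $2\,|1 - \braket{\psi}{\psi_{(B)}}|$ that Theorem~\ref{thm:fidcor} already bounds. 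The remaining points --- that $P$ has rank at most $\nu^{|B|}$ and that $\ket{\psi_{(B)}}$ lies in $\mathrm{range}(P\otimes\bb1_{B\sqcup C})$ --- are routine.
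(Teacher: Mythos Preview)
Your proof is correct and essentially identical to the paper's: both use the variational characterization \eqref{eq:tauvar}, choose $P$ to be the projection onto $\mathrm{span}\{\ket{\phi_A(\BS_B)}\}$, observe that $\ket{\psi_{(B)}}$ lies in the range of $P\otimes\bb1_{BC}$ so that $\tr(\varrho_A P)\ge|\braket{\psi}{\psi_{(B)}}|^2$, and then finish with $1-|z|^2\le 2|1-z|$ and Theorem~\ref{thm:fidcor}. Your additional paragraph contrasting this with the trace-norm route is a helpful clarification but not part of the paper's argument.
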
 
\begin{proof}
We use the variational characterization of 
\begin{equation}\label{eq:tauvar}
 \tau_{\varrho_A}(N)  = 1 - \max\left\{ \tr \varrho_A P_N \ | \ P_N \, \mbox{orthogonal projection of $ \rank \ P_N \leq N $} \right\}  
\end{equation} 
in the case $ N = \nu^{|B|} $ with $ P_N $ is the spectral projection onto the span of the vectors $ \ket{\phi_A(\BS_B)} $ with $ \BS_B \in \Omega_B $, cf.~\eqref{def:phivectots}. 
Since $ P_N \otimes \mathbbm{1}_{BC} \ket{ \psi_{(B)}}   = \ket{ \psi_{(B)}} $ and $ \mathbbm{1} \geq  \ket{ \psi_{(B)}}  \bra{ \psi_{(B)}}  $, we have
\be
\tr \varrho_A(\psi)  P_N = \langle \psi | P_N \otimes \mathbbm{1}_{BC} | \psi \rangle \geq \left| \braket{\psi}{\psi_{(B)}} \right|^2 . 
\ee
The stated inequality then follows by estimating $ 1 -  \left| \braket{\psi}{\psi_{(B)}} \right|^2 \leq 2  \left| 1- \braket{\psi}{\psi_{(B)}} \right| $ 
and the fidelity bound of Theorem~\ref{thm:fidcor}. 
\end{proof}

The following generalizes Definition~\ref{def:expdecay1}.

\begin{definition}\label{def:expdec}
Given a decaying function  $ \varphi: \mathbb{N} \to [0,1] $ 
we say that  a state $ \ket{\psi} $  is \emph{conditionally $ \varphi $-decoupled at rate $\varphi(l)$  beyond distance  $ l_0: \{ \mbox{subsets of $ W $} \} \to [0,\infty) $,} 
 if for any regular $ A \subset W $ and any $l\geq l_0(A) $:
 \begin{equation}\label{ass:expdec}
 \TV_{B_l}(A \, |\, C_l )  + \vartheta_{B_l}(A |  C_l)  \leq   \varphi(l-l_0(A))  
\end{equation}
with $ B_l = \{ u \in W \backslash A  \ | \ d(u,A) \leq l \} $ and $C_l= W\setminus (A\cup B_l)$. 
\end{definition} 
Two rate functions of natural interest are:
\begin{itemize} 
\item exponential:  $\varphi(k) = e^{-k/\xi}$, at some $ \xi > 0 $,
\item power law:  $ \varphi(k) =  \left( 1+ k/\xi \right)^{-\alpha } $ 
at some $ \alpha > 2 $.
\end{itemize} 
The distance function $ l_0 $ plays the role of the length scale at which the $ \varphi $-decay sets in and allows to absorb $ A $-dependent prefactors as in~\eqref{ass:expdec0}. These prefactors could be independent of $ A $, in which case $ l_0 $ is constant. \\

In this setting, we obtain the following general entanglement bound.    

\begin{theorem}[General bound]\label{thm:arealaw}
For a  state $ \ket{\psi} $, which is conditionally $\varphi $-decoupled beyond distance $ l_0 $, the entanglement entropy corresponding to any regular $ A \subset W $ is bounded according to
\begin{equation}\label{eq:arealaw}
S(\varrho_A(\psi)) \leq C_d \ | \partial A | \left( \ln \nu \right) \left[ l_0(A) +  2 \ (1+l_0(A)) \ I_1(0;A)  \right] +  2 I_2(0;A)  ,
\end{equation}
where
\begin{equation}\label{eq:defPHi}
 I_1(l;A) := \mkern-10mu \sum_{k = l }^{L(A)} \mkern-5mu  \varphi(k) (1+k) , \quad I_2(l;A) := \mkern-10mu  \sum_{k = l}^{L(A)}    \mkern-5mu \varphi(k)  \left( 1 + \ln \varphi(k)^{-1}  \right),
\end{equation}
with  $ L(A) :=  | A|\slash (c_d\ |\partial A| )$ and  $c_d, C_d $ are the constants from~\eqref{eq:Buffervol}. 
\end{theorem}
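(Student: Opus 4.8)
The plan is to estimate $S(\varrho_A)$ by partitioning the spectrum of $\varrho_A$ into blocks indexed by buffer widths and controlling the total mass of each block through the spectral-distribution bound \eqref{eq:massTV}. To lighten the bookkeeping I would first reduce to the case $L(A)\ge l_0(A)$ with $l_0(A)$ a non-negative integer: if $L(A)<l_0(A)$ the trivial estimate $S(\varrho_A)\le |A|\ln\nu = c_d L(A)|\partial A|\ln\nu$ already implies \eqref{eq:arealaw}, and replacing $l_0(A)$ by $\lceil l_0(A)\rceil$ only enlarges the right-hand side. List the eigenvalues of $\varrho_A$ as $\lambda_1\ge\lambda_2\ge\cdots\ge 0$ and write $\eta(x):=-x\ln x$. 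For each integer $l$ with $l_0(A)\le l\le L(A)$ set $N_l:=\nu^{|B_l|}$; these cutoffs increase with $l$, and by \eqref{eq:massTV}, Definition~\ref{def:expdec}, and the regularity bound \eqref{eq:Buffervol},
\[
\tau_{\varrho_A}(N_l)\ \le\ 2\,\varphi\big(l-l_0(A)\big),\qquad \ln N_l\ \le\ C_d\,(\ln\nu)\,l\,|\partial A|.
\]

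The second ingredient is the elementary inequality $\sum_{j=1}^m\eta(p_j)\le s\ln m+\eta(s)$, valid for non-negative $p_1,\dots,p_m$ with $s=\sum_j p_j$ (normalize to a probability vector and use that its Shannon entropy is $\le\ln m$). I would apply it on the bottom block $\{j\le N_{l_0(A)}\}$, on each intermediate block $\{N_{l-1}<j\le N_l\}$, and on the tail $\{j>N_{l_{\max}}\}$, where $l_{\max}$ is the largest admissible integer width. On each block the sum of the relevant $\lambda_j$ is at most the value of $\tau_{\varrho_A}$ at the block's left endpoint — hence at most $2\varphi(\cdot)$, except on the bottom block where it is at most $1$ — while the number of eigenvalues involved is at most the block's right endpoint, in particular at most $\nu^{|A|}=\dim\mathcal H_A$ for the tail. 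The bottom block then contributes at most $\ln N_{l_0(A)}\le C_d(\ln\nu)\,l_0(A)\,|\partial A|$, which is the bare $l_0(A)$ term in \eqref{eq:arealaw}.

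It remains to sum the intermediate blocks and the tail. With $k:=l-1-l_0(A)$, the $s\ln m$ parts combine into $2C_d(\ln\nu)|\partial A|\sum_k\varphi(k)\,(l_0(A)+1+k)$; using $l_0(A)+1+k\le(1+l_0(A))(1+k)$ and enlarging the range to $k\le L(A)$ produces the $C_d(\ln\nu)\,|\partial A|\cdot 2(1+l_0(A))\,I_1(0;A)$ term, where the tail's dimension factor is absorbed via $|A|=c_d L(A)|\partial A|$ together with the estimate $\varphi(L(A)-l_0(A))\,L(A)\lesssim(1+l_0(A))\,I_1(0;A)$, which uses $l_{\max}\ge L(A)-1$ and the monotonicity of $\varphi$. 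The $\eta(s)$ parts yield terms of the form $\eta(2\varphi(k))$: when $\varphi(k)\le 1/(2e)$, monotonicity of $\eta$ on $[0,1/e]$ bounds each by $2\varphi(k)\big(1+\ln\varphi(k)^{-1}\big)$, while the finitely many blocks with $\varphi(k)>1/(2e)$ are covered because then $\varphi(0)>1/(2e)$, so $I_2(0;A)$ already exceeds a fixed positive constant; altogether these sum to a bounded multiple of $I_2(0;A)$. Collecting the three pieces gives \eqref{eq:arealaw}.

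The main obstacle is the bookkeeping at the two ends of the spectrum: one must place $l_{\max}$ so that the spectral tail never directly probed by a buffer still has mass controlled by $\varphi$, while exploiting that its dimension is only $\nu^{|A|}$ rather than the a priori larger $\nu^{|W\setminus A|}$; and one must check that the $O(1)$ residuals from the few blocks on which $\varphi$ is of order one do not spoil the clean form of the bound. Both points are settled using the regularity condition \eqref{eq:Buffervol} and the monotonicity of $\varphi$ — and it is precisely the use of $|A|=c_d L(A)|\partial A|$ for the tail alongside $|B_l|\le C_d l|\partial A|$ for the probed blocks that makes both constants $c_d,C_d$ appear in the statement.
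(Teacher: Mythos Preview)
Your proposal is correct and follows essentially the same route as the paper: split the eigenvalue sum into blocks indexed by buffer widths $l=l_0(A),l_0(A)+1,\dots$, bound the mass of each block by $\tau_{\varrho_A}(\nu^{|B_{l-1}|})\le 2\varphi(l-1-l_0(A))$ via \eqref{eq:massTV} and \eqref{ass:expdec}, apply the concavity estimate $\sum_j\eta(p_j)\le s\ln(m/s)$ on each block, and sum. The only presentational differences are that the paper does not isolate a separate ``tail'' block (it simply notes that once $l_0+k$ reaches $L(A)$ the lower bound $|B_l|\ge c_d l|\partial A|$ forces $\nu^{|B_{l_0+k}|}\ge\nu^{|A|}$, so the blocks $J_k$ already exhaust the spectrum), and it handles the regime where $2\varphi(k)$ is of order one by the monotonicity of $x\mapsto x(\ln(a/x)+1)$ rather than by your case distinction; both devices lead to the same bound.
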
 
\begin{proof}
We split the sum in~\eqref{eq:vonNeumann}  into contributions stemming from the summation index in the intervals 
$$ J_0 :=\big[1,\nu^{|B_{l_0}|} \big] \,  \mbox{and} \,  J_k := \big( \nu^{|B_{l_0+k-1}|}, \nu^{|B_{l_0+k}|} \big] \,  \mbox{with $ k \geq 1 $.}
$$
The bound~\eqref{eq:massTV} and the assumed decay~\eqref{ass:expdec} is used to control 
\begin{equation}\label{eq:muest}
\mu_k := \sum_{j \in J_k }  \lambda^\downarrow_j(\varrho_A) \leq \tau_{\varrho_A}\big(\nu^{|B_{l_0+k-1}|}\big) \leq 2 \  \varphi(k-1) 
\end{equation}
in case $ k \geq 1 $.  This will be employed to estimate 
\begin{align}\label{eq:Sk1}
	S_k \ &  :=   \sum_{j \in J_k }  \lambda^\downarrow_j(\varrho_A) \ln  \lambda^\downarrow_j(\varrho_A)^{-1} \leq \mu_k \ln\frac{|J_k|}{\mu_k}  \leq  \mu_k  \ln\frac{\nu^{|B_{l_0+k}|} }{\mu_k} . \end{align}
 For $ k = 0 $, we use the trivial bound $ \mu_0 := \sum_{j \in J_k }  \lambda^\downarrow_j(\varrho_A) \leq 1 $, which yields $ S_0 \leq |B_{l_0}| \ln \nu $. 
By~\eqref{eq:Buffervol}, this leads to the first term in the right of \eqref{eq:arealaw}.
If $ k \geq 1 $, we employ~\eqref{eq:muest} and \eqref{eq:Sk1} to further  bound
$$
S_k \leq 2 \ \varphi(k-1) \left( \ln \frac{ \nu^{|B_{l_0+k}|}}{ \varphi(k-1) } +1 \right) ,
$$
by the monotonicity of  $ F(x) = x \left(  \ln (a / x ) + 1 \right) $ for  $0 \leq x \leq 1 \leq a$. Collecting the contributions to
$   S(\varrho_A) = S_0 + \sum_{k\geq 1 }  S_k  $ and noticing that $l_0 + k $ does not exceed $ L(A) $, since the total number of eigenvalues of $ \varrho_A \equiv \varrho_A(\psi) $ is $ \nu^{|A|} $,  yields the result. 
\end{proof} 

In order to elucidate this general statement, let us add some comments and formulate a simple consequence.
 In case $ A $ is a cube, it is regular and $ L(A) $ is proportional to the cube's edge length.   Moreover, 
power law decay of $ \varphi $ with $ \alpha > 2 $ is enough to ensure the convergence of the series corresponding to the sums in~\eqref{eq:defPHi}. One may therefore upper bound these sums independently of~$ A$.
We thus arrive at the following, which is one of the main results of this work. 
\begin{corollary}[Area-law bound] \label{thm:gen_area1}
For any state $ \ket{\psi} $ which is conditionally $ \varphi $-decoupled with power law decay with $ \alpha > 2 $ beyond distance $ l_0 $, there is $ C  \in (0,\infty) $, independent of $ l_0 $,  such that 
the state's entanglement entropies   satisfy
\begin{equation}\label{eq:arealaw1}
S(\varrho_A(\psi)) \leq C\ | \partial A | \ ( 1+ l_0(A) )\, 
\end{equation}
for all  regular $ A  \subset W $.
\end{corollary}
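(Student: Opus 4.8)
The plan is to derive Corollary~\ref{thm:gen_area1} as a direct specialization of Theorem~\ref{thm:arealaw} by showing that, for power-law decay with exponent $\alpha > 2$, the two series $I_1(0;A)$ and $I_2(0;A)$ from~\eqref{eq:defPHi} are bounded by a finite constant depending only on $\xi$ and $\alpha$, uniformly in $A$. Once that is established, \eqref{eq:arealaw} collapses into $S(\varrho_A(\psi)) \leq C_d (\ln\nu)\,|\partial A|\,\big[l_0(A) + 2(1+l_0(A)) C'\big] + 2 C''$, which is of the claimed form $C\,|\partial A|\,(1+l_0(A))$ after absorbing the $O(1)$ additive term into the $|\partial A| \geq 1$ prefactor.

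First I would record the two elementary summability facts. For $\varphi(k) = (1+k/\xi)^{-\alpha}$ we have $\varphi(k)(1+k) \leq \xi^\alpha (1+k)^{1-\alpha}$ up to a constant comparison between $1+k/\xi$ and $1+k$, so $I_1(0;A) \leq \sum_{k\geq 0} \varphi(k)(1+k) \leq \sum_{k\geq 0} C_\xi (1+k)^{1-\alpha} =: C_1(\xi,\alpha) < \infty$ precisely because $\alpha > 2$ makes the exponent $1-\alpha < -1$. For $I_2$, note $\ln\varphi(k)^{-1} = \alpha \ln(1+k/\xi) \leq \alpha(1+k/\xi)$, hence $\varphi(k)\big(1+\ln\varphi(k)^{-1}\big) \leq C_\xi (1+k)^{-\alpha}\cdot C(1+k) = C'_\xi (1+k)^{1-\alpha}$, again summable for $\alpha > 2$; thus $I_2(0;A) \leq C_2(\xi,\alpha) < \infty$. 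Both bounds are manifestly independent of $A$ and of $l_0$.

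Next I would invoke that a cube (or more generally any rectangular domain, as needed for the QIM application) is regular in the sense of Definition~\ref{def:regularA}, so Theorem~\ref{thm:arealaw} applies; strictly the Corollary as stated only asserts the bound for regular $A$, so this point is immediate. Substituting the uniform bounds $I_1(0;A) \leq C_1$, $I_2(0;A) \leq C_2$ into~\eqref{eq:arealaw} gives
\begin{equation*}
S(\varrho_A(\psi)) \leq C_d (\ln\nu)\,|\partial A|\,\big(l_0(A) + 2(1+l_0(A))C_1\big) + 2 C_2 \leq C\,|\partial A|\,(1+l_0(A)),
\end{equation*}
where in the last step I use $l_0(A) + 2(1+l_0(A))C_1 \leq (1+2C_1+2C_1 l_0(A)+ l_0(A)) \leq (1+2C_1)(1+l_0(A))$ and $2C_2 \leq 2C_2\,|\partial A|\,(1+l_0(A))$ since $|\partial A| \geq 1$ and $l_0(A) \geq 0$. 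Collecting all model-independent constants into a single $C$ depending only on $d$, $\nu$, $\xi$, $\alpha$ — but not on $l_0$ — yields~\eqref{eq:arealaw1}.

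I do not expect a serious obstacle here: the content of the Corollary is entirely in Theorem~\ref{thm:arealaw}, and the remaining work is the routine verification that $\sum_k (1+k)^{1-\alpha}$ and $\sum_k (1+k)^{1-\alpha}\ln(1+k)$-type tails converge when $\alpha>2$, together with bookkeeping of constants. The only mild subtlety worth stating explicitly is why $\alpha>2$ rather than $\alpha>1$ is needed: the extra factor $(1+k)$ coming from the buffer-volume growth $|B_l| \asymp l|\partial A|$ in~\eqref{eq:defPHi} costs one power, turning the threshold from $1$ to $2$; I would flag this in one sentence so the reader sees where the hypothesis is used. A secondary point to mention is that the additive $2I_2$ term, being $O(1)$, is harmless only because $|\partial A|(1+l_0(A)) \geq 1$ always — a triviality, but worth a clause so the final inequality is clean.
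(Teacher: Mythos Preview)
Your proposal is correct and follows essentially the same approach as the paper: bound $I_1(0;A)$ and $I_2(0;A)$ by $A$-independent convergent series using $\alpha>2$, then substitute into Theorem~\ref{thm:arealaw} and absorb the additive $2I_2$ term into the multiplicative constant via $|\partial A|\geq 1$. The paper's version is terser (it simply records the two convergent series and says ``absorbed into the multiplicative constant''), while you spell out the bookkeeping and the reason for the threshold $\alpha>2$; the content is the same.
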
 
\begin{proof}
Under the assumption of fast algebraic decay with $ \alpha > 2 $, one has 
$$ I_1(0;A) \leq \sum_{k=0}^\infty  \frac{1+k}{(1+ k/\xi)^{\alpha}}  < \infty, \qquad   I_2(0;A)  \leq  \sum_{k=0}^\infty  \frac{1  + \alpha \ln (1+ k/\xi)}{(1+ k/\xi)^{\alpha} } < \infty \, .
$$ 
The claim then follows from Theorem~\ref{thm:arealaw}, with the additive constants  absorbed into the multiplicative constant  in the right side of~\eqref{eq:arealaw1}. 
\end{proof}

Several remarks are in order:
\begin{enumerate}
\item The bound~\eqref{eq:arealaw1} describes a strict area law in case the decoupling distance $l_0(A) $ is independent of $ A $. This applies in particular to stoquastic states with $ p $ any (even critical) classical Gibbs probability measure corresponding to a  finite range interaction, cf.~\eqref{eq:Gibbs1}. Such states fall into the category of tensor-network states, for which the area-law of the entanglement entropy is well known (cf.~\cite{Verstraete:2006sf,Wolf:2008aa}).  

\item For ground-states of the quantum Ising model existing techniques yield $ l_0(A) \propto \ln |\partial A | $ for general regular $ A $, which results in a logarithmically corrected area law unless $ d = 1 $.

\item The general results presented here have some partial overlap with theorems that were formulated for different classes of states, under different assumptions and restrictions:

\begin{enumerate}
\item Various facets of the area-law conjecture for  ground states were settled for $ d = 1  $ as well as for exactly solvable or non-interacting systems (cf.~\cite{Eisert:2010aa} for a review).  In a trailblazing work, Hastings~\cite{Hastings:2007aa}  proved it for ground states of one dimensional systems with gapped local Hamiltonians. 
The argument was streamlined in the work of Brand{\~a}o and  Horodecki~\cite{Brandao:2015ab}, which clarified that the argument is, in essence, based on the state's exponential clustering property.   	

\item 
For $ d = 2 $, under the assumption of frustration-freeness of  a gapped local Hamiltonian on $ W \subset \mathbb{Z}^2 $,  Anshu, Arad and Vidick~\cite{Anshu:2016aa} devised a technique, based on a so-called detectability argument, for constructing an approximate ground state projector.  Using it, Anshu, Arad and Gosset~\cite{AnshuArad22} established an area-law bound with a poly-logarithmic correction for the entanglement entropy of slub subsets $A$ of rectangular sets $ W $. 

\item 
Beyond the above two-dimensional case, only conditional results with 
assumptions that do not cover the full non-critical regime  are available~\cite{Masanes:2009aa,Cho:2014aa,Brandao:2015aa}.  The area law is also known to fail for some non-regular graphs~\cite{Aharonov:2014aa}.  

\end{enumerate}
 \item An alternative measure of entanglement is provided by  the  R\'enyi entropy $ S_\alpha(\varrho) := (1-\alpha)^{-1} \ln \tr \varrho^{\alpha} $.   Since it is monotone decreasing in $ \alpha \in ( 0,\infty) $ and converges to the von Neumann entropy for $ \alpha \to  1 $,  the above-discussed area-law bounds extend to all values $ \alpha \geq 1 $.  In particular, that applies to integer values $ \alpha \in \{2,3,\dots\}$.  
 
 \item For $ d = 1$  and nearly-critical ground states,  Cardy and Calabrese  \cite{Calabrese:2004aa} predicted the asymptotic value of the R\'enyi $2$ entanglement entropy,  expressed in terms of the correlation length and the central charge of the conformal field theory describing the critical ground state.
 \end{enumerate}
   
\subsection{Entropy approximations}

Arguments used in the proof of Theorem~\ref{thm:arealaw} also yield the following bound on entropy differences between the state of interest $ \ket{\psi} $ and its approximation $ \ket{\psi_{(B_l)}} $ corresponding to a buffer of width $ l $.  
\begin{theorem}[General entropy difference]\label{thm:compS}
In the situation of Theorem~\ref{thm:arealaw} for any $l \geq l_0(A) $:
\begin{align}\label{eq:contentgen}
\left| S(\varrho_A(\psi)) - S(\varrho_A(\psi_{(B_{l})})) \right| \leq &  \sqrt{2 \varphi(l-l_0(A))} \left( 1+ \ln \frac{\nu^{|B_{l}|}}{\sqrt{2\varphi(l-l_0(A))}}  \right) \\
& + 2 C_d | \partial A |\left( \ln \nu \right)  ( l  + 1) \  I_1(l-l_0(A);A)  +  2 I_2(l-l_0(A);A)   \, ,   \notag
\end{align}
with $ \ket{\psi_{(B_l)} }$ as defined in~\eqref{eq:appr} with $ B \equiv B_l $, $ C \equiv W \backslash ( A \sqcup B_l) $ and minimizers $ \alpha = \alpha_{AB} $, $ \gamma = \gamma_{C B} $ from~\eqref{def:K}. 
\end{theorem}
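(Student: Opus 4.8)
The plan is to avoid feeding the difference $\varrho_A(\psi)-\varrho_A(\psi_{(B_l)})$ directly into the Fannes-type continuity bound~\eqref{eq:Fannes}: that difference generically has full rank $\nu^{|A|}$, which would inject a useless volume-order $|A|\ln\nu$. Instead I would interpose the compression of $\varrho_A(\psi)$ onto the low-dimensional support of the approximant. Let $Q$ be the orthogonal projection onto the span of the vectors $\ket{\phi_A(\BS_B)}$ from~\eqref{def:phivectots} (with $B=B_l$), i.e.\ the support of $\varrho_A(\psi_{(B_l)})$, whose dimension is at most $\nu^{|B_l|}$, and set $\sigma_A:=Q\,\varrho_A(\psi)\,Q$. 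Since $\sigma_A$ and $\varrho_A(\psi_{(B_l)})$ are both supported on $\mathrm{ran}\,Q$, the rank of their difference is at most $\nu^{|B_l|}$, and since $\|QXQ\|_1\le\|X\|_1$ we get $\|\sigma_A-\varrho_A(\psi_{(B_l)})\|_1\le\|\varrho_A(\psi)-\varrho_A(\psi_{(B_l)})\|_1\le 2\sqrt{2\varphi(l-l_0(A))}$ by Theorem~\ref{thm:fidcor} and~\eqref{eq:FvG}. Inserting these two facts into~\eqref{eq:Fannes} produces precisely the first term on the right of~\eqref{eq:contentgen}, up to a correction of order $\varphi(l-l_0(A))\,|B_l|\ln\nu$ from $\sigma_A$ being subnormalised: indeed, from the overlap bound of the proof of~\eqref{eq:massTV}, $\tr\sigma_A=\langle\psi|Q\otimes\mathbbm{1}_{BC}|\psi\rangle\ge|\braket{\psi}{\psi_{(B_l)}}|^2\ge1-2\varphi(l-l_0(A))$, so this correction is absorbed.

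It remains to control $S(\varrho_A(\psi))-S(\sigma_A)$ by the second line of~\eqref{eq:contentgen}. By the pinching inequality (pinching by $\{Q,\mathbbm{1}-Q\}$ raises the von Neumann entropy), $S(\varrho_A(\psi))\le S(\sigma_A)+S(\sigma_A^\perp)$ with $\sigma_A^\perp:=(\mathbbm{1}-Q)\varrho_A(\psi)(\mathbbm{1}-Q)$, where $\tr\sigma_A^\perp=1-\tr\sigma_A\le 2\varphi(l-l_0(A))$ and, by the min--max principle, $\lambda^\downarrow_j(\sigma_A^\perp)\le\lambda^\downarrow_j(\varrho_A(\psi))$ for all $j$. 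These two properties make the multiscale estimate behind Theorem~\ref{thm:arealaw} applicable to $S(\sigma_A^\perp)$ almost verbatim: split the index set into $J_0:=[1,\nu^{|B_l|}]$ and the blocks $J_k:=(\nu^{|B_{l+k-1}|},\nu^{|B_{l+k}|}]$, $k\ge1$, attached to the nested buffers $B_{l+k}$; the block masses obey $\sum_{j\in J_0}\lambda^\downarrow_j(\sigma_A^\perp)\le\tr\sigma_A^\perp\le2\varphi(l-l_0(A))$ and, by~\eqref{eq:massTV} and Definition~\ref{def:expdec}, $\sum_{j\in J_k}\lambda^\downarrow_j(\sigma_A^\perp)\le\tau_{\varrho_A}(\nu^{|B_{l+k-1}|})\le2\varphi(l+k-1-l_0(A))$ for $k\ge1$. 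Bounding each block contribution by $\mu\bigl(\ln(\nu^{|B_{l+k}|}/\mu)+1\bigr)$, using the monotonicity of $x\mapsto x(\ln(a/x)+1)$ on $(0,1]$, the estimate $|B_{l+k}|\le C_d(l+k)|\partial A|$ from~\eqref{eq:Buffervol}, and a reindexing, one reaches --- exactly as in the proof of Theorem~\ref{thm:arealaw}, only with the series starting at argument $l-l_0(A)$, the $J_0$-block being of the order of the leading tail term --- the bound $S(\sigma_A^\perp)\le 2C_d|\partial A|(\ln\nu)(l+1)\,I_1(l-l_0(A);A)+2\,I_2(l-l_0(A);A)$.

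Summing the two estimates gives $S(\varrho_A(\psi))-S(\varrho_A(\psi_{(B_l)}))\le$ the right side of~\eqref{eq:contentgen}; for the opposite sign one argues along the same lines, comparing instead $\varrho_A(\psi_{(B_l)})$ with the rank-$\nu^{|B_l|}$ spectral truncation of $\varrho_A(\psi)$ and using $S(\varrho_A(\psi))\ge S_0$, with $S_0$ the entropy of that truncation. The main obstacle is exactly the rank issue flagged at the outset: continuity of the entropy may only be invoked after $\varrho_A(\psi)$ has been replaced by a $\nu^{|B_l|}$-dimensional compression, and the residual low-lying spectral weight --- small in total mass but carried by an exponentially large number of eigenvalues --- must be handled by the multiscale sum of Theorem~\ref{thm:arealaw} rather than by any crude continuity estimate. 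The remaining effort is bookkeeping: tracking the subnormalisation of $\sigma_A$ and matching numerical constants to the stated form.
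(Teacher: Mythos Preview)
Your approach is correct and close in spirit, but the paper's route is more economical. The paper works entirely with the eigenvalue sequence of $\varrho_A(\psi)$: writing $S(\varrho_A(\psi))=S_0+\sum_{k\ge 1}S_k$ with $S_0:=-\sum_{j=1}^{\nu^{|B_l|}}\lambda_j^\downarrow\ln\lambda_j^\downarrow$, it bounds the tail $\sum_{k\ge 1}S_k$ by the multiscale argument (as you do for $S(\sigma_A^\perp)$), and then controls $|S_0-S(\widehat\varrho_A)|$ in one stroke by comparing the two ordered eigenvalue sequences via the Lidskii--Mirsky inequality $\sum_j|\lambda_j^\downarrow-\mu_j^\downarrow|\le\|\varrho_A-\widehat\varrho_A\|_1$ (Bhatia, Lemma~IV.3.2) and feeding that into the Fannes-type bound. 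Because $S(\varrho_A(\psi))=S_0+\sum_{k\ge 1}S_k$ is an exact identity with non-negative tail terms, both signs of the difference fall out at once.

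Your version instead projects onto the support of $\widehat\varrho_A$ and uses the pinching inequality plus Cauchy interlacing. This works, but it forces an asymmetric treatment: the upper bound goes through $\sigma_A=Q\varrho_A Q$ and $S(\sigma_A^\perp)$, while for the lower bound you revert to the spectral truncation --- which is precisely the paper's object $S_0$. So your ``opposite sign'' paragraph is really the paper's argument, and once you have it you could have dispensed with $Q$, pinching, and the subnormalisation bookkeeping altogether. The paper's single decomposition also avoids having to renormalise $\sigma_A$: it applies the Fannes-type bound to two eigenvalue sequences of equal length rather than to a subnormalised compressed operator.
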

\begin{proof} We start as in the proof of Theorem~\ref{thm:arealaw} with $ l_0 $ substituted by $ l $. 
It was proven there that the contributions to  $ S(\varrho_A(\psi)) $ from intervals $ J_k =  \big( \nu^{|B_{l+k-1}|}, \nu^{|B_{l+k}|} \big] $ with $ k \geq 1 $ are bounded according to
$$ \sum_{k\geq 1} S_k \leq  2 C_d | \partial A |  \left( \ln \nu \right) ( l  + 1) \  I_1(l-l_0(A);A)  + 2 I_2(l-l_0(A);A)  \, . 
$$
The proof then proceeds by estimating the difference of $S( \widehat\varrho_A )$ with  $ \widehat\varrho_A \equiv \varrho_A(\psi_{(B_{l})}) $ and $ S_0 = - \sum_{j= 1}^{\nu^{|B_l|}} \lambda_j^\downarrow \ln  \lambda_j^\downarrow $, where $\big( \lambda_j^\downarrow \big)$ stand for the eigenvalues of $ \varrho_A \equiv \varrho_A(\psi) $  labeled  in decreasing order. For that purpose, let $ \mu_j^\downarrow $, $ j \in \{ 1, \dots , \nu^{|B_l|} \} $,  
denote the decreasing eigenvalues of $ \widehat\varrho_A $.  Then
\begin{align}\label{eq:Deltabound}
\frac{1}{2} \Delta_l  & := \frac{1}{2} \sum_{j=1}^{ \nu^{|B_{l}|}}  \left| \lambda_j^\downarrow -  \mu_j^\downarrow\right| 
\leq  \frac{1}{2} \left\| \varrho_A-  \widehat\varrho_A\right\|_1 \leq \sqrt{2 ( 1 - \Re\braket{\psi}{\psi_{(B_{l})}} }  \notag \\
&\,  \leq  \sqrt{2 \TV_{ B_l }(A \ | W \backslash ( A \sqcup B_l) ) } \leq \sqrt{2\varphi(l-l_0)} \, ,
\end{align} 
where we used \cite[Lemma IV.3.2]{Bathia}, the variational characterization of the trace norm, $ \| a \|_1 = \sup_{ \| b \| = 1}\left|  \tr  (a b) \right|  $, as well as~\eqref{eq:FvG}. The last line is from the fidelity bound~\eqref{eq:fidcor} and the assumed decay. 
The Fannes-type entropy bound from Proposition~\ref{cor:Fannes} thus yields
\begin{align}\label{eq:firstterm}
& \left| S_0 -  S(\widehat\varrho_A ) \right| =  \Big|  \sum_{j=1}^{\nu^{|B_l|}} \big(\lambda_j^\downarrow \ln  \lambda_j^\downarrow  - \mu_j^\downarrow \ln  \mu_j^\downarrow\big) \Big|   \leq \frac{1}{2} \Delta_l  \left(1+  \ln \frac{2\,   \nu^{|B_l|} }{ \Delta_l}  \right)  \, .
\end{align}
Inserting~\eqref{eq:Deltabound} into the right side and using the monotonicity of the functions then yields the first term  in the right side of~\eqref{eq:contentgen}. 
\end{proof}
Let us also put this result into context. 
Applying the trivial bound~\eqref{eq:compest} on the entropy of the comparison state,
which holds for arbitrary $ l \in \mathbb{N} $ and follows from the rank-estimate on $\varrho_A(\psi_{(B)}) $,  Theorem~\ref{thm:compS} essentially implies \eqref{eq:arealaw}. 
Since the rank of $ \varrho_A(\psi) $ is unknown, the bound~\eqref{eq:contentgen} does not immediately follow from the Fannes-Audenaert-continuity bounds~\cite{Fannes:1973aa,Audenaert:2007aa} or any of its relatives~\cite{Alicki:2004aa,Shirokov:2023aa}. \\

In the subsequent analysis of the mutual information, we need a minor modification of Theorem~\ref{thm:compS} involving the approximation of $ \psi $ by the vector $  \ket{\psi_{(B_l)} }$ from~\eqref{eq:appr} with phases now optimized according to~\eqref{eq:K2}. For simplicity, we restrict attention to the case $ n = 2 $, i.e., the phases will be adapted to the decomposition of $ A $ into two separated components $ A_1 , A_2 \subset W $, cf.~Figure~\ref{fig:2}.  To formulate a bound, we require the following modification of Defintion~\ref{def:expdec}.

 \begin{figure}[ht]
\begin{center}\includegraphics[width=.75\textwidth]{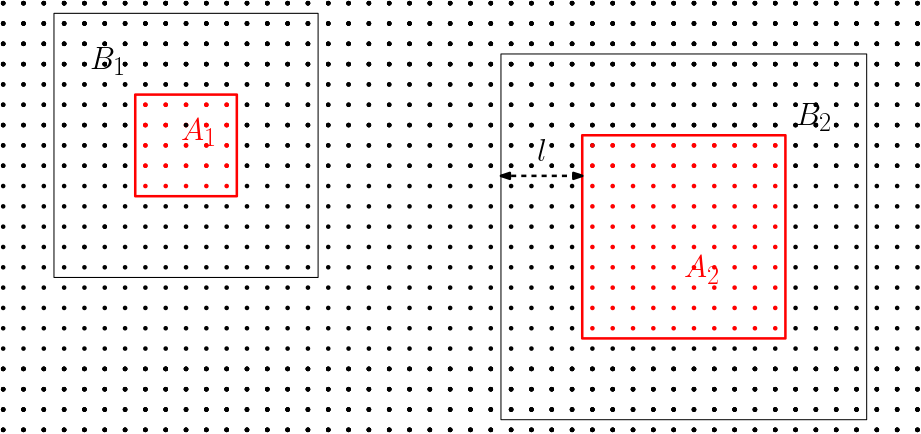}
\caption{Sketch of the geometric situation in Definition~\ref{def:expdec2} and the  proof of Theorem~\ref{thm:mutualinfo}: two cubes $ A_1 , A_2 $ with their buffers $ B_1 , B_2 $ of width $ l  \approx  d(A_1,A_2)/ 3 $.}\label{fig:2} 
\end{center}\end{figure}

\begin{definition}\label{def:expdec2} In the situation of Definition~\ref{def:expdec}, a state $ \ket{\psi} $ is \emph{strongly conditionally $ \varphi $-decoupled at rate $\varphi(l)$  beyond distance $ l_0$}
 if for any $ A \subset W $ composed of two disjoint, regular subsets $ A_1, A_2 $ at distance $ d(A_1,A_2) \geq 3 l $ with their corresponding disjoint buffers $B_{j,l} := \{ u \in W \backslash A  \ | \ d(u,A_j ) \leq l \} $ and  $ l \geq l_0(A)  $, one has
\begin{equation}\label{ass:expdec2}
\TV_{B_l}(A \, |\, C_l  ) + \delta\left( A_1 \sqcup B_{1,l} | A_2 \sqcup B_{2,l} \right)  + \vartheta_{B_{1,l}, B_{2,l}}(A_1 , A_2 |  C_l  )  \leq   \varphi(l-l_0(A))  ,
\end{equation}
where $  B_l := \sqcup_{j=1}^2  B_{j,l} $ and $ C_l := W \backslash ( A  \sqcup B_l) $. 
\end{definition}
\noindent \textbf{Remark:}~~It will be shown in Theorem~\ref{thm:delta} that in the above situation
\begin{equation}\label{eq:estmiddle}
 \delta\left( A_1 \sqcup B_{1,l} | A_2 \sqcup B_{2,l} \right) \leq  \delta\left( A_1 | A_2 \right) +  \delta_{B_{1,l}}\left( A_1 | A_2 \right)  +  \delta_{B_{2,l}}\left( A_1 | A_2 \right) +  \delta_{B_{l}}\left( A_1 | A_2 \right) \, . 
\end{equation}

The case $\varphi(k) = e^{-k/\xi}$ at some $ \xi > 0 $, will be referred to as strongly conditionally exponentially decoupled. 
One may now use the proof of Theorem~\ref{thm:compS} to establish the following variant.   In its formulation, we specify a logarithmic decoupling distance $ l_0(A) \propto \ln |\partial A | $. As explained in Section~\ref{sec:delta}, for ground states of systems with the FKG property such as in the quantum Ising model, present classical statistical mechanics techniques allow to establish exponential decoupling beyond that distance in case one stays away from the ground-state phase transition.

\begin{corollary}[Entropy difference]\label{thm:contbound}
For a state $ \ket{\psi} $, which is strongly conditionally exponentially decoupled beyond distance  $ l_0(A) = c \ln |\partial A| $ for some $ c \in (0,\infty) $, and any $ A \subset W $ composed of two disjoint, regular, non-empty subsets $ A_1, A_2 $ at distance $ d(A_1,A_2) \geq 3 l $ with  $ l\geq l_0(A) $, there are constants $ \widehat C, \eta, \kappa \in (0,\infty) $, which are independent of $l,  l_0 $ and $ A $, such that 
for all three choices $ D \in \{ A , A_1 , A_2 \} $:
\begin{equation}\label{eq:contbound}
\left| S(\varrho_D(\psi)) - S\big(\varrho_D(\psi_{(B_l)}) \big) \right| \leq \widehat C \,    |\partial A|^\kappa     \exp\left( - l / \eta \right) \, ,
\end{equation}
with $ \ket{\psi_{(B_l)} }$ as defined in~\eqref{eq:appr} with $ B \equiv B_l := B_{1,l} \sqcup B_{2,l} $, $ C \equiv C_l := W \backslash (A \sqcup  B_l)  $ and $ \alpha = \alpha_1+ \alpha_2 $, and minimizers $ \alpha_1 ,  \alpha_2 $, $ \gamma $ from~\eqref{eq:K2}. 
\end{corollary}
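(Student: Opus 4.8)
\medskip\noindent\emph{Proof plan.}
I would treat the three choices of $D$ by the triangle inequality for the von Neumann entropy, reducing each to fidelity estimates already in hand, and then substitute the explicit exponential rate $\varphi(k)=e^{-k/\xi}$ and the logarithmic distance $l_0(A)=c\ln|\partial A|$, where throughout $|\partial A|:=|\partial A_1|+|\partial A_2|$. Being strongly conditionally $\varphi$-decoupled, $\ket{\psi}$ is in particular conditionally $\varphi$-decoupled in the sense of Definition~\ref{def:expdec}; by regularity of $A_1,A_2$ one has $|B_l|\le C_d\,l\,|\partial A|$, and $A,A_1,A_2$ are all regular with $L(A),L(A_1),L(A_2)$ bounded by a power of $|\partial A|$ (for $d\ge2$; the case $d=1$ is where the obstacle below really bites). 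In all the estimates below, once $\varphi$ and $l_0$ are inserted, the geometric tails $I_1,I_2$ of~\eqref{eq:defPHi} are $O\!\big((1+l)e^{-(l-l_0(A))/\xi}\big)$, the combinatorial prefactors ($\nu^{|B_l|}$, $(1+l)|\partial A|$, $\ln\varphi^{-1}$) contribute a power of $|\partial A|$ times a polynomial in $l$, and $l^{m}e^{-l/\xi}\le C_m e^{-l/(2\xi)}$; these absorptions turn every bound into the asserted form $\widehat C\,|\partial A|^{\kappa}e^{-l/\eta}$, and I will not repeat this step. For $D=A$ the claim is the proof of Theorem~\ref{thm:compS} with a single change: the comparison state $\ket{\psi_{(B_l)}}$ of the corollary carries the phase $\alpha=\alpha_1+\alpha_2$ from~\eqref{eq:K2} rather than the optimizer of~\eqref{def:K}. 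That proof uses $\ket{\psi_{(B_l)}}$ only through the fidelity bound of Theorem~\ref{thm:fidcor}, and since choosing $\alpha=\alpha_1+\alpha_2$ and $\gamma$ in~\eqref{def:K} gives $\vartheta_{B_l}(A|C_l)\le\vartheta_{B_{1,l},B_{2,l}}(A_1,A_2|C_l)$, one still obtains $\big[\tfrac12\|\varrho_A(\psi)-\varrho_A(\psi_{(B_l)})\|_1\big]^2\le 2|1-\langle\psi|\psi_{(B_l)}\rangle|\le 2\,\TV_{B_l}(A|C_l)+2\,\vartheta_{B_{1,l},B_{2,l}}(A_1,A_2|C_l)\le 2\,\varphi(l-l_0(A))$ by~\eqref{ass:expdec2}; feeding this into~\eqref{eq:contentgen} finishes $D=A$.

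\smallskip\noindent\emph{The case $D=A_1$ (and $D=A_2$ symmetrically).}
Here $\varrho_{A_1}(\psi_{(B_l)})$ is \emph{not} known to have rank $\le\nu^{|B_l|}$, because tracing out $A_2$ acts inside the conditionally correlated block $\ket{\phi_A(\BS_B)}$. I would interpolate through two auxiliary states whose reductions to $A_1$ \emph{do} have rank $\le\nu^{|B_l|}$: the single-block Markovian approximation $\ket{\psi_{(B_{1,l})}}$ of~\eqref{eq:appr} with $A\mapsto A_1$, $B\mapsto B_{1,l}$ and optimal phase $\alpha_{A_1B_1}$ from~\eqref{def:K}; and the conditionally factorised vector $\ket{\psi^{\sharp}_{(B_l)}}$ obtained from $\ket{\psi_{(B_l)}}$ by replacing, inside $\ket{\phi_A(\BS_B)}$, the weight $\sqrt{p(\BS_A|\BS_B)}$ by $\sqrt{p(\BS_{A_1}|\BS_B)\,p(\BS_{A_2}|\BS_B)}$ (the cross phase cancels because $\alpha=\alpha_1+\alpha_2$). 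Three inputs: (i) $\|\psi-\psi_{(B_{1,l})}\|$ and $\|\psi-\psi_{(B_l)}\|$ are both $\le\sqrt{2\varphi(l-l_0(A))}$, by Theorem~\ref{thm:fidcor} together with Definition~\ref{def:expdec} for $A_1$, resp.\ with~\eqref{ass:expdec2}, and using $l_0(A_1)\le l_0(A)$; (ii) a short computation gives $1-\langle\psi_{(B_l)}|\psi^{\sharp}_{(B_l)}\rangle\le\TV_{B_l}(A_1|A_2)$, whence $\|\psi_{(B_l)}-\psi^{\sharp}_{(B_l)}\|\le\sqrt{2\,\TV_{B_l}(A_1|A_2)}$, and by the $\delta(A_1\sqcup B_{1,l}\,|\,A_2\sqcup B_{2,l})$-term of~\eqref{ass:expdec2} together with the comparison estimates of Section~\ref{sec:delta} (cf.~\eqref{eq:estmiddle}) one has $\TV_{B_l}(A_1|A_2)\le C\,\varphi(l-l_0(A))$; (iii) $|S(\varrho_{A_1}(\psi))-S(\varrho_{A_1}(\psi_{(B_{1,l})}))|$ is bounded by Theorem~\ref{thm:compS} applied to $A_1$ with buffer $B_{1,l}$. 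Then $|S(\varrho_{A_1}(\psi))-S(\varrho_{A_1}(\psi_{(B_l)}))|$ splits, via the triangle inequality for $S$, into the difference in (iii), plus $|S(\varrho_{A_1}(\psi_{(B_{1,l})}))-S(\varrho_{A_1}(\psi^{\sharp}_{(B_l)}))|$ — two rank-$\le\nu^{|B_l|}$ states at trace distance $O(\sqrt{\varphi(l-l_0(A))})$, bounded through the rank-sensitive continuity bound~\eqref{eq:Fannes} with $\rank$ of the difference $\le 2\nu^{|B_l|}$ — plus $|S(\varrho_{A_1}(\psi^{\sharp}_{(B_l)}))-S(\varrho_{A_1}(\psi_{(B_l)}))|$, which is the delicate term.

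\smallskip\noindent\emph{Main obstacle.}
The delicate term $|S(\varrho_{A_1}(\psi^{\sharp}_{(B_l)}))-S(\varrho_{A_1}(\psi_{(B_l)}))|$ must \emph{not} be estimated by the continuity bound with ambient dimension $\dim\mathcal H_{A_1}=\nu^{|A_1|}$, since $|A_1|$ is a volume and (already for $d=1$) is unbounded while $|\partial A|$ is $O(1)$; a crude estimate would ruin the bound. The fix is to rerun the spectral-window decomposition of the proof of Theorem~\ref{thm:arealaw} \emph{for $\varrho_{A_1}(\psi_{(B_l)})$}, using the genuinely rank-$\nu^{|B_l|}$ operator $\varrho_{A_1}(\psi^{\sharp}_{(B_l)})$ as the low-rank reference for the spectral head: the head contributes only $|B_l|\ln\nu=O(l\,|\partial A|)$ via~\eqref{eq:Fannes} with $\rank\le 2\nu^{|B_l|}$, and the spectral tail obeys $\tau_{\varrho_{A_1}(\psi_{(B_l)})}\!\big(\nu^{|B_l|}\big)\le\|\varrho_{A_1}(\psi_{(B_l)})-\varrho_{A_1}(\psi^{\sharp}_{(B_l)})\|_1\le 2\sqrt{2\,\TV_{B_l}(A_1|A_2)}=O(\sqrt{\varphi(l-l_0(A))})$, so that bounding the tail's entropic weight through the spectral windows (rather than by tail-mass times $\ln\dim\mathcal H_{A_1}$) keeps the volume out. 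The nontrivial bookkeeping is then to check that the three decorrelation quantities bundled in Definition~\ref{def:expdec2} — the buffer-conditioned correlation $\TV_{B_l}(A|C_l)$, the block separation $\delta(A_1\sqcup B_{1,l}\,|\,A_2\sqcup B_{2,l})$, and the phase defect $\vartheta_{B_{1,l},B_{2,l}}(A_1,A_2|C_l)$ — simultaneously supply all of the above for all three $D$ with the \emph{single} buffer width $l$, the middle term being exactly what is needed in step (ii).
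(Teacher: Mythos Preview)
Your route is substantially more involved than the paper's. The paper argues uniformly over $D\in\{A,A_1,A_2\}$ in two lines: it first records that the tail estimate $\tau_{\varrho_D(\psi)}(\nu^{|B_l|})\le 2\,\TV_{B_l}(A\,|\,C_l)+2\,\vartheta_{B_{1,l},B_{2,l}}(A_1,A_2\,|\,C_l)\le 2\varphi(l-l_0(A))$ holds for all three $D$ (by the derivation of~\eqref{eq:massTV}), and then simply declares that the bound~\eqref{eq:contentgen} of Theorem~\ref{thm:compS} continues to hold for the state $\ket{\psi_{(B_l)}}$ of the corollary; substituting $\varphi(k)=e^{-k/\xi}$ and $l_0=c\ln|\partial A|$ produces~\eqref{eq:contbound}. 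No interpolating states $\psi_{(B_{1,l})}$, $\psi^{\sharp}_{(B_l)}$ and no separate treatment of $D=A_1,A_2$ appear.

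You correctly flag a point the paper glides over---for $D=A_1$ the comparison state $\varrho_{A_1}(\psi_{(B_l)})$ is not a priori of rank $\le\nu^{|B_l|}$, whereas the argument leading to~\eqref{eq:firstterm} uses exactly that---and your detour through $\psi^{\sharp}_{(B_l)}$ is a sensible attempt at a low-rank surrogate. However, your handling of the ``delicate term'' is incomplete. To keep the volume out of $S_{\mathrm{tail}}$ for $\varrho_{A_1}(\psi_{(B_l)})$ ``through the spectral windows'' you would need bounds $\tau_{\varrho_{A_1}(\psi_{(B_l)})}(\nu^{|B_k|})$ that \emph{decay in $k$} for all $k\ge l$; you supply only the single estimate at $k=l$. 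Manufacturing the missing ones by comparing $\psi_{(B_l)}$ with a wider-buffer $\psi^{\sharp}_{(B_k)}$ via the triangle inequality through $\psi$ picks up a $k$-independent term of order $\|\psi-\psi_{(B_l)}\|\sim\sqrt{\varphi(l-l_0)}$, and summing that over the $O(L(A_1))$ windows reinstates the volume---precisely the failure mode you warned against (and fatal already in $d=1$). A secondary issue is your input~(ii): $\TV_{B_l}(A_1|A_2)$ is not among the three quantities bundled in~\eqref{ass:expdec2}, and~\eqref{eq:estmiddle} bounds $\delta(A_1\sqcup B_{1,l}\,|\,A_2\sqcup B_{2,l})$ from \emph{above} by a sum including $\TV_{B_l}(A_1|A_2)$, which is the wrong direction for extracting the latter.
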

\begin{proof}
The proof starts by noting that for any of the three reduced states $ \varrho_D \equiv  \varrho_D(\psi) $ with $ D \in \{ A , A_1 , A_2 \} $, the arguments in the derivation of~\eqref{eq:massTV} still yield for any $ l \geq l_0(A) $:
\begin{align*}
\tau_{\varrho_D}\big(\nu^{|B_l|} \big) & \leq 2 \ \TV_{B_l}(A \, |\, W \backslash ( A \sqcup B_l) ) + 2 \ \vartheta_{B_l}(A \, |\, C_l  ) \\
&\leq  2 \ \TV_{B_l}(A \, |\, C_l ) + 2 \  \vartheta_{B_{1,l}, B_{2,l}}(A_1 , A_2 |  C_l  )  \leq  2 \ \varphi(l-l_0(A) ) \, .   
\end{align*}
Applying the same strategy as in Theorem~\ref{thm:compS}, we note that~\eqref{eq:contentgen} continues to hold also for the state $ \ket{\psi_{(B_l)} }$ with $ \varphi(l) = \exp\left( - l/\xi\right) $.  In this case, for any $ l \geq l_0 $:
\begin{align*}
 I_1(l-l_0;A)  & \leq e^{- (l-l_0(A))/\xi}\ \sum_{k=0}^\infty e^{- k/\xi} ( 1+ k + l - l_0(A) ) \, , \\ 
 I_2(l-l_0;A)  & \leq e^{- (l-l_0)/\xi} \ \sum_{k=0}^\infty e^{- k/\xi} \left[ 1 +  (k +  l - l_0(A))/\xi \right] \, , \\
 \sqrt{ \varphi(l-l_0)} \left( \ln \frac{\nu^{|B_{l_0}|}}{\sqrt{\varphi(l-l_0)}} + 1 \right) & \leq e^{- (l-l_0)/(2\xi) } \left( C_d\ |\partial A| \ l_0\  (\ln \nu) + 1+ (l-l_0)/(2\xi)  \right) \, . 
\end{align*}
Since $ \exp\left( l_0(A)/\xi \right) = |\partial A |^{c/\xi} $ and any polynomial in $ l $ can be absorbed by peeling of a part of the leading exponential, this implies the result. 
\end{proof}

 \section{Mutual information bounds}\label{sec:mutual}
 
The  above approximation results yield interesting consequences for the mutual information $ I_\psi(A_1 :A_2 ) $ encoded in a strongly conditionally exponentially decoupled state $ \ket{\psi} $ and two disjoint subsets $ A_1 , A_2 \subset W $, cf.~Figure~\ref{fig:2}. Corollary~\ref{thm:contbound}  allows to approximate each of the entropies entering the definition~\eqref{def:mutualinfo} of the mutual information in terms of the entropy of a comparison product state with vanishing mutual information.

\begin{theorem}[Exponential decay of mutual information]\label{thm:mutualinfo}
Let $ \ket{\psi} $ be a state, which is strongly conditionally exponentially decoupling beyond $ l_0(A) = c \ln |\partial A| $ for some $ c \in (0,\infty) $, and $ A = A_1 \sqcup A_2 $ with $ A_1, A_2 \subset W $ non-empty squares, which are separated by distance $ d(A_1,A_2) \geq 3 l_0( A) $. Then there are constants $ C, \kappa, \eta \in (0,\infty) $, which are independent of $ A_1 , A_2 $, such that
\begin{equation}\label{eq:mutual_info_bound5}  
	I_\psi(A_1 : A_2)  \leq C \max\left\{ |\partial A_1|, |\partial A_2 | \right\}^\kappa \ \exp\left(- d(A_1,A_2)/\eta \right) \, . 
\end{equation}
\end{theorem}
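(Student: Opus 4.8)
The plan is to approximate each of the three entropies $S(\varrho_{A_1}(\psi))$, $S(\varrho_{A_2}(\psi))$, $S(\varrho_{A_1\cup A_2}(\psi))$ occurring in $I_\psi(A_1:A_2)$ by the corresponding entropy of one single comparison vector that is a genuine product across $A_1$ and $A_2$, so that its own mutual information vanishes. First I would fix the buffer width $l:=\lfloor d(A_1,A_2)/3\rfloor$, which by the hypothesis $d(A_1,A_2)\ge 3l_0(A)$ satisfies $l\ge l_0(A)=c\ln|\partial A|$, form $B_{j,l}$, $B_l:=B_{1,l}\sqcup B_{2,l}$, $C_l:=W\setminus(A\sqcup B_l)$, and let $\alpha_1,\alpha_2,\gamma$ be the minimizers of~\eqref{eq:K2}. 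As comparison state I would take the version of~\eqref{eq:appr} adapted to the two-component geometry in which the $A$-block conditional is itself factorized, namely
\[
\ket{\psi'}:=\sum_{\BS_{B_1},\BS_{B_2}}\sqrt{p(\BS_{B_1})\,p(\BS_{B_2})}\ \ket{\phi_{A_1}(\BS_{B_1})}\otimes\ket{\phi_{A_2}(\BS_{B_2})}\otimes\ket{\BS_{B_1}}\otimes\ket{\BS_{B_2}}\otimes\ket{\phi_C(\BS_B)},
\]
with $\ket{\phi_{A_j}(\BS_{B_j})}$ as in~\eqref{def:phivectots}. Computing the partial trace one checks $\varrho_{A_1\cup A_2}(\psi')=\varrho_{A_1}(\psi')\otimes\varrho_{A_2}(\psi')$, hence $I_{\psi'}(A_1:A_2)=0$, and that $\varrho_D(\psi')$ has rank at most $\nu^{|B_l|}$ for $D=A_1\cup A_2$ and at most $\nu^{|B_{j,l}|}$ for $D=A_j$. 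Since $I_\psi(A_1:A_2)\ge 0$ by subadditivity, it follows that
\[
I_\psi(A_1:A_2)=I_\psi(A_1:A_2)-I_{\psi'}(A_1:A_2)\le\sum_{D\in\{A_1,\,A_2,\,A_1\cup A_2\}}\bigl|S(\varrho_D(\psi))-S(\varrho_D(\psi'))\bigr|,
\]
so the whole problem reduces to three entropy-difference estimates.

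Second, I would bound the global fidelity $\braket{\psi}{\psi'}$. Writing $\psi$'s amplitude through conditional probabilities and splitting off the phase factor $U_{(\alpha_j),\gamma}$ exactly as in the proof of Theorem~\ref{thm:fidcor}, the deviation of $\braket{\psi}{\psi'}$ from $1$ breaks up into precisely the three quantities of the strong-decoupling assumption~\eqref{ass:expdec2}: the phase contribution is $\le\vartheta_{B_{1,l},B_{2,l}}(A_1,A_2\,|\,C_l)$ by Cauchy--Schwarz; the amplitude contribution telescopes, through the intermediate conditional law $p(\BS_B)\,p(\BS_C|\BS_B)\,p(\BS_A|\BS_B)$, into $\TV_{B_l}(A\,|\,C_l)$ plus the total-variation distance between the joint law of $(\BS_{A_1},\BS_{B_1})$ and $(\BS_{A_2},\BS_{B_2})$ and the product of its marginals, i.e.\ $\delta(A_1\sqcup B_{1,l}\,|\,A_2\sqcup B_{2,l})$ (where~\eqref{eq:estmiddle} may be invoked to pass to buffer-conditioned correlations). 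Hence $|1-\braket{\psi}{\psi'}|\le\varphi(l-l_0(A))=e^{-(l-l_0(A))/\xi}$, and by~\eqref{eq:FvG} $\|\varrho_D(\psi)-\varrho_D(\psi')\|_1\le 2\sqrt{2\varphi(l-l_0(A))}$ for all three $D$.

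The hard part is the third step: upgrading this trace-norm closeness to an entropy bound carrying the \emph{surface} area, not the volume, of $A$. Here I would rerun the multiscale--truncation mechanism behind Theorem~\ref{thm:compS} and Corollary~\ref{thm:contbound}, now for the pair $\varrho_D(\psi)$, $\varrho_D(\psi')$: decompose $S(\varrho_D(\psi))=S_0+\sum_{k\ge1}S_k$, where $S_0$ collects the top $\nu^{|B_l|}$ eigenvalues; since $\varrho_D(\psi')$ has no more than $\nu^{|B_l|}$ eigenvalues and $\sum_j|\lambda_j^\downarrow(\varrho_D(\psi))-\lambda_j^\downarrow(\varrho_D(\psi'))|\le\|\varrho_D(\psi)-\varrho_D(\psi')\|_1$, the Fannes-type bound~\eqref{eq:Fannes} applied to these two eigenvalue sequences of common length $\le\nu^{|B_l|}$ gives $|S_0-S(\varrho_D(\psi'))|\le\sqrt{2\varphi(l-l_0)}\,\bigl(1+\ln(\nu^{|B_l|}/\sqrt{2\varphi(l-l_0)})\bigr)$, while the tail $\sum_{k\ge1}S_k$ is controlled scale by scale by the spectral-distribution estimate~\eqref{eq:massTV}, $\tau_{\varrho_D(\psi)}(\nu^{|B_{l'}|})\le2\varphi(l'-l_0)$ for $l'\ge l$, exactly as in the proofs of Theorems~\ref{thm:arealaw}/\ref{thm:compS} (this uses that $\psi$ is itself conditionally $\varphi$-decoupled around each single region $A_j$, which is part of the hypothesis). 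Because $|B_{l'}|\le C_d\,l'\,|\partial A|$ and $e^{l_0(A)/\xi}=|\partial A|^{c/\xi}$, every polynomial-in-$l$, $\ln\nu$ and $|\partial A|$ factor is absorbed by peeling a sliver off the leading exponential, leaving $|S(\varrho_D(\psi))-S(\varrho_D(\psi'))|\le C'|\partial A|^{\kappa'}e^{-l/\eta'}$. Summing the three $D$, using $|\partial A|=|\partial A_1|+|\partial A_2|\le 2\max\{|\partial A_1|,|\partial A_2|\}$ and $l\ge d(A_1,A_2)/3-1$, and renaming constants, gives~\eqref{eq:mutual_info_bound5}.

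The obstruction this last step overcomes is that a naïve Fannes estimate applied directly to $\varrho_D(\psi)$ and $\varrho_D(\psi')$ (or, equally, to $\varrho_D(\psi_{(B_l)})$ and $\varrho_D(\psi')$) would charge the rank of $\varrho_{A_1}(\psi)$, which a priori is of order $\nu^{|A_1|}$ and would produce a volume prefactor that is fatal in $d=1$; it is the multiscale decomposition of $S(\varrho_D(\psi))$, combined with the fact that $\varrho_D(\psi')$ is genuinely low rank, that replaces $|A|$ by $|\partial A|$. The residual bookkeeping---integrality of $l$, the constants $c_d,C_d$ of~\eqref{eq:Buffervol}, and checking that strong $\varphi$-decoupling beyond $l_0(A)$ supplies ordinary $\varphi$-decoupling beyond $l_0(A_j)$ around the single regions needed for the tail estimate---is routine.
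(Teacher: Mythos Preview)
Your argument is correct and close in spirit to the paper's, but the two proofs are organized differently. The paper proceeds by a \emph{two-step} approximation: first it invokes Corollary~\ref{thm:contbound} to replace each entropy $S(\varrho_D(\psi))$ by $S(\varrho_D(\psi_{(B_l)}))$ (this is where the multiscale mechanism enters), and only then compares $\psi_{(B_l)}$ with the fully factorized state $\psi^{(12)}_{(B)}$ --- which is exactly your $\psi'$. Since both $\varrho_D(\psi_{(B_l)})$ and $\varrho_D(\psi^{(12)}_{(B)})$ have rank at most $\nu^{|B_l|}$, the second step needs only the plain Fannes bound~\eqref{eq:Fannes} together with the fidelity estimate $|1-\braket{\psi^{(12)}_{(B)}}{\psi_{(B)}}|\le \delta(A_1\sqcup B_1\,|\,A_2\sqcup B_2)$, in which the phase factors cancel outright.

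You instead collapse the two steps: you bound $|1-\braket{\psi}{\psi'}|$ directly by telescoping the amplitude through the intermediate law $\hat p=p(\BS_B)p(\BS_C|\BS_B)p(\BS_A|\BS_B)$, which neatly produces $\TV_{B_l}(A|C_l)+\delta(A_1\sqcup B_{1,l}|A_2\sqcup B_{2,l})$, and then run the multiscale/Theorem~\ref{thm:compS} mechanism once for the pair $(\varrho_D(\psi),\varrho_D(\psi'))$. This is a genuine alternative: it makes transparent how all three pieces of the strong-decoupling hypothesis~\eqref{ass:expdec2} enter a single fidelity bound, at the price of rerunning the multiscale argument with $\psi'$ in place of $\psi_{(B_l)}$ rather than quoting Corollary~\ref{thm:contbound} as a black box. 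The paper's route is more modular (the hard multiscale work is isolated in Corollary~\ref{thm:contbound}, and the second comparison is elementary because both states are low-rank); yours is more direct and slightly shorter. Both rely on the same tail input, namely that strong decoupling includes ordinary decoupling (the ``In the situation of Definition~\ref{def:expdec}'' clause), which is what drives $\tau_{\varrho_D(\psi)}(\nu^{|B_{l'}|})\le 2\varphi(l'-l_0)$ for $l'\ge l$.
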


Before spelling out the proof, let us note that the quantum Pinsker inequality~\eqref{eq:Pinsker} implies that under the conditions of Theorem~\ref{thm:mutualinfo} the correlations of \emph{any} pair of local observables are exponentially decaying in the observables' distance - and not only those diagonal in the computational basis. 

Moreover, as is evident from the subsequent proof, Theorem~\ref{thm:mutualinfo} admits a generalization to the mutual information of finitely many disjoint squares $ A_1, \dots , A_n $.

\begin{proof}[Proof of Theorem~\ref{thm:mutualinfo}]
We abbreviate by $ \varrho_{A_1} \equiv  \varrho_{A_1}(\psi) $, $ \varrho_{A_2} \equiv  \varrho_{A_2}(\psi) $ and $ \varrho_{A} \equiv  \varrho_{A}(\psi) $ the reduced states of the subsets and their union $ A = A_1 \sqcup A_2 $. Furthermore, we set $ l := d(A_1,A_2)/ 3 $ and denote by $$ B_{j} \equiv B_{j, l } := \{  u \in W \backslash A_j  \ | \ d(u,A_j) \leq l \} $$  the buffers of width $ l $ around $ A_j $ with $ j \in \{1,2\} $, cf.~Figure~\ref{fig:2}. We also set 
$$  B \equiv B_l := B_{1,l} \sqcup B_{2,l} \quad\mbox{and}\quad  C := W \backslash (A\sqcup B) \, .$$

The proof then proceeds by a two-step approximation. We first use Corollary~\ref{thm:contbound} to approximate each of the three entropies in~\eqref{def:mutualinfo} in terms of
the respective entropies of the approximating state $ \ket{\psi_{(B)} } $ defined in~\eqref{eq:appr} with $ \alpha = \alpha_{1} +\alpha_2 $, $ \gamma $ the minimizers from~\eqref{eq:K2}. Corollary~\ref{thm:contbound} with $ l_0(A) = c \ln |\partial A | $ then guarantees that these entropy differences are upper bounded by the right side in~\eqref{eq:mutual_info_bound5}.\\

In a second step, we approximate the state $ \ket{\psi_{(B)}} $ by the  state 
\begin{equation}
 \ket{\psi^{(12)}_{(B)} }  := \sum_{\BS} \exp\left( i \sum_{j=1}^2 \alpha_j(\BS_{A_j},\BS_{B_j}) + i \gamma(\BS_C,\BS_B) \right)  \sqrt{ p^{(12)}_{(B)}(\BS)}  \ \ket{\BS} 
 \end{equation} 
 defined in terms of the probability
\begin{equation}
 p^{(12)}_{(B)}(\BS)  := p(\BS_{C} | \BS_{B} )  \prod_{j=1}^2 p(\BS_{B_j})   p(\BS_{A_j} | \BS_{B_j} )   \  \, . 
\end{equation}
The following are straightforward extensions of~\eqref{reduced} and Theorem~\ref{thm:fidcor}:
\begin{enumerate}
\item The reduced state of $  \ket{\psi^{(12)}_{(B)} }  $ on $ A = A_1 \sqcup A_2 $ is the product state
\begin{equation}
 \varrho_{A}(\psi^{(12)}_{(B)}) =  \varrho_{A_1}(\psi^{(12)}_{(B)}) \otimes  \varrho_{A_2}(\psi^{(12)}_{(B)}) \, , 
\end{equation}
where $  \ket{\varphi_{A_j}(\BS_{B_j}) }:= \sum_{\BS_{A_j}}e^{i \alpha_j(\BS_{A_j},\BS_{B_j})} \sqrt{ p(\BS_{A_j} | \BS_{B_j}) }  \ \ket{\BS_{A_j}}$ with $ j \in \{1,2\} $, and 
$$
 \varrho_{A_j}(\psi^{(12)}_{(B)}) = \sum_{\BS_{B_j}} p(\BS_{B_j}) \  \ket{\varphi_{A_j}(\BS_{B_j}) } \bra{\varphi_{A_j}(\BS_{B_j}) } \, . 
$$
\item The fidelity with respect to $ \ket{\psi_{(B)}} $ is bounded by
\begin{align}\label{eq:fidest2}
 \left| 1 - \braket{\psi^{(12)}_{(B)}}{\psi_{(B)} }\right|
& \leq \sum_{\BS_A,\BS_B}    \Big| \ p(\BS_B)p(\BS_A | \BS_B)  -  \prod_{j=1}^2 p(\BS_{B_j})   p(\BS_{A_j} | \BS_{B_j} ) \Big|   \notag \\
& \leq   \delta\left( A_1 \sqcup B_1 | A_2 \sqcup B_2 \right) \leq \varphi(l - l_0(A))  \, .
\end{align}
Note that the phase factors in $\ket{\psi_{(B)}} $ and $ \ket{\psi^{(12)}_{(B)}} $ agree and hence drop out in the scalar product. 
\end{enumerate}
Since the mutual information of the approximating state $ \ket{\psi^{(12)}_{(B)}} $ vanishes, $$ I_{\psi^{(12)}_{(B)}}(A_1 : A_2 ) = 0 \, , $$
it remains to bound the entropy differences 
$ S\big(\varrho_A(\psi^{(12)}_{(B)}) \big) - S\big(\varrho_A(\psi_{(B)}) \big) $ and similarly for $ A $ replaced by $ A_1 $ and $ A_2 $. This is done with the help of the Fannes' type bound, Proposition~\ref{cor:Fannes}. 
For its application, we note that the rank of the occurring state differences are bounded by $ \dim \mathcal{H}_B = \nu^{|B|} = \nu^{|B_1|+|B_2|} $, where by~\eqref{eq:Buffervol}, one may further estimate $ |B_j| \leq C_d |\partial A_j| \ l $. 

The trace distances occurring in the application of Proposition~\ref{cor:Fannes} are estimated by fidelities using~\eqref{eq:FvG}, which is upper bounded using~\eqref{eq:fidest2}:
\begin{align}
  2 \left( 1 - \Re \braket{\psi^{(12)}_{(B)}}{\psi_{(B)} }\right) & \leq 2   \delta\left( A_1 \sqcup B_1 | A_2 \sqcup B_2 \right) \notag \\
&  \leq 2 e^{- (l-l_0(A))/(3\xi)} = 2 \ |\partial A|^{c/(3\xi)} e^{-l/(3\xi)} \, . 
\end{align}
The last follows from the assumption and the fact that $ d(A_1 \sqcup B_1 , A_2 \sqcup B_2) \geq l / 3 $ by construction.
Since $ |\partial A| \leq 2 \max\left\{ |\partial A_1|, |\partial A_2 | \right\} $,  this completes the proof. 
\end{proof}

\section{Bounds on conditional dependence}\label{sec:delta}

The purpose of this section is to elaborate further on the inter-domain correlation functions introduced in Section~\ref{sec:fidelity}.  These  play a central role in our analysis of the probabilistic part, and its applications to specific models. 
\subsection{General properties} 

We first collect a number of useful properties of $\TV(A \, | \, C)$ defined in~\eqref{def_Delta} for arbitrary disjoint subsets $ A, C $ among which is last but not least~(iii), which is a variant of subadditivity.  

\begin{theorem}   \label{thm:delta}
For any probability measure $p$ on a product space $\Omega= \Omega_0^W$ with $W$ a finite set and $\Omega_0$ the range of values of the single ``spin variables'', the inter-domain correlation  has the following properties  for disjoint $ A, C, D \subset W $:
\begin{enumerate} [i)]
\item  symmetry:  $\TV(A \, | \, C) = \TV(C \, | \, A)$ 
\item  monotonicity: 
$
\TV(A \,  | \, C) \ \leq\  \TV(A  \, | \, C \sqcup D) \ 
$
\item sub-cocycle   
   for     disjoint $C, D\subset W\setminus A$ 
\be\label{sub_add}
 \TV(A  \, | \, C \sqcup D) \  \leq \  \TV(A \, | \, C)  + \TV_C(A \, | \,  D) \, .
\ee  
\end{enumerate} 
\end{theorem}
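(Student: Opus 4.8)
The three assertions are of increasing difficulty, and I would dispatch them in the order (i), (ii), (iii).

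\textbf{Symmetry (i).} This is immediate from the first line of the definition~\eqref{def_Delta}: the quantity $\TV(A\,|\,C)=\tfrac12\sum_{\BS_A,\BS_C}|p(\BS_A,\BS_C)-p(\BS_A)p(\BS_C)|$ is manifestly symmetric under exchanging the roles of $A$ and $C$, since $p(\BS_A,\BS_C)$ is the joint marginal on $A\sqcup C$ and the product of marginals is also symmetric. No work is needed beyond pointing at the formula.

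\textbf{Monotonicity (ii).} Here I would use the representation on the second line of~\eqref{def_Delta}, $\TV(A\,|\,C)=\sum_{\BS_C}p(\BS_C)\sum_{\BS_A}[p(\BS_A|\BS_C)-p(\BS_A)]_+$, which exhibits $\TV(A\,|\,C)$ as the expectation over $\BS_C$ of the total variation distance between the conditional law $p(\cdot|\BS_C)$ of $\BS_A$ and its mean $p(\cdot)$. The point is that $p(\BS_A)=\sum_{\BS_C,\BS_D}p(\BS_C,\BS_D)\,p(\BS_A|\BS_C,\BS_D)$, i.e. the unconditioned marginal is an average of the finer conditional laws $p(\BS_A|\BS_C,\BS_D)$. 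Total variation distance to a fixed reference measure is convex in its first argument (it is a norm composed with an affine map), so Jensen gives
\[
\sum_{\BS_A}\bigl[p(\BS_A|\BS_C)-p(\BS_A)\bigr]_+ = \tfrac12\|p(\cdot|\BS_C)-p(\cdot)\|_1 \leq \sum_{\BS_D}p(\BS_D|\BS_C)\,\tfrac12\bigl\|p(\cdot|\BS_C,\BS_D)-p(\cdot)\bigr\|_1,
\]
and averaging over $\BS_C$ against $p(\BS_C)$ turns the right side into $\TV(A\,|\,C\sqcup D)$. The only mild care needed is to keep the reference measure $p(\cdot)$ fixed throughout and recognize $\|\cdot\|_1/2$ as the sum of positive parts (the positive and negative parts being equal since both measures have mass one).

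\textbf{Sub-cocycle (iii).} This is the substantive step and the expected main obstacle. I would work with the positive-part representation and insert the intermediate conditional law: for fixed $\BS_C,\BS_D$, write
\[
p(\BS_A|\BS_C,\BS_D)-p(\BS_A) = \bigl[p(\BS_A|\BS_C,\BS_D)-p(\BS_A|\BS_C)\bigr] + \bigl[p(\BS_A|\BS_C)-p(\BS_A)\bigr],
\]
apply $[\,x+y\,]_+\leq[x]_+ + [y]_+$, sum over $\BS_A$, and take the expectation over $(\BS_C,\BS_D)$ against $p(\BS_C,\BS_D)$. The second bracket, after summing out $\BS_D$ (it does not depend on $\BS_D$), contributes exactly $\sum_{\BS_C}p(\BS_C)\sum_{\BS_A}[p(\BS_A|\BS_C)-p(\BS_A)]_+ = \TV(A\,|\,C)$; the first bracket contributes $\sum_{\BS_C,\BS_D}p(\BS_C,\BS_D)\sum_{\BS_A}[p(\BS_A|\BS_C,\BS_D)-p(\BS_A|\BS_C)]_+$, which is precisely $\TV_C(A\,|\,D)$ as defined in~\eqref{def_cond_B} (with $B$ there playing the role of $C$ here). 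The delicate point to verify carefully is that these two expressions match the claimed right-hand side term-for-term, in particular that the conditioning variable is handled consistently: $\TV_C(A\,|\,D)$ averages, over $\BS_C$, the $\TV(A\,|\,D)$-correlation of $(\BS_A,\BS_D)$ computed in the conditional measure $p(\cdot\,|\,\BS_C)$, and one should check that this conditional measure's own positive-part representation reproduces exactly the double sum above — which it does because conditioning is transitive, $p(\BS_A|\BS_C,\BS_D)=p(\BS_A|\BS_C)(\BS_D)$ in the obvious sense. I do not expect any inequality to be lossy beyond the single use of subadditivity of $[\,\cdot\,]_+$; the whole argument is essentially the triangle inequality for total variation, dressed up with the correct conditioning bookkeeping.
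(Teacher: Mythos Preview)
Your proof is correct. Parts (i) and (iii) are essentially the paper's argument in different clothing: where the paper works with the absolute-value form $\tfrac12\sum|p(\BS_A,\BS_C,\BS_D)-p(\BS_A)p(\BS_C,\BS_D)|$ and applies the triangle inequality after factoring out $p(\BS_C)$, you work with the positive-part form on conditional probabilities and apply $[x+y]_+\le[x]_++[y]_+$. These are the same manipulation up to the identity $\tfrac12\sum|a-b|=\sum[a-b]_+$ for probability vectors.

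Part (ii) is where you genuinely diverge. The paper invokes the dual (variational) characterization $\TV(A\,|\,C)=\tfrac12\max_{\|F\|_\infty\le1}\sum F(\BS_A,\BS_C)[p(\BS_A,\BS_C)-p(\BS_A)p(\BS_C)]$ and observes that enlarging $C$ to $C\sqcup D$ enlarges the class of test functions. You instead stay on the primal side and use convexity of $\mu\mapsto\|\mu-p(\cdot)\|_1$ together with $p(\cdot\,|\,\BS_C)=\sum_{\BS_D}p(\BS_D|\BS_C)\,p(\cdot\,|\,\BS_C,\BS_D)$. Both arguments are short and self-contained; the variational form has the minor advantage of making monotonicity a one-line observation without any computation, while your Jensen argument keeps the whole proof in a single representation and avoids introducing the dual formula just for this step.
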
 
\begin{proof} i) The $A \leftrightarrow C$ symmetry is obvious from the defining equation~\eqref{def_Delta}. 

 ii)  By a known variational characterization of the ``total variation'' between pairs of probability measures,  \eqref{def_Delta} can be  restated as
 \be \label{var}
\TV(A \, | \, C) = 
\frac 12   \max_{ \|F\|_\infty \leq 1}  \sum_{\BS_A,\BS_C} F(\BS_A,\BS_C) \left[ p(\BS_A \BS_C) - p(\BS_A) p(\BS_C) )\right]  \, ,
 \ee    
 where the maximum is over $ F : \Omega_A \times \Omega_C \to [-1,1] $ and attained at $ F(\BS_A,\BS_C) := \sign [ p(\BS_A,\BS_C)  - p(\BS_A) p(\BS_C)\big]$. 
Monotonicity in $C$ follows, since replacing $\BS_C$ by $\BS_{C\sqcup D}$ only enlarges the collection of functions over which the maximization is performed.

iii) 
Regrouping the relevant  conditional probabilities  into two parts, and applying the triangle inequality, we get: 
\begin{align} \label{proof_subadd}
 &  \left| p(\BS_A,\BS_C,\BS_D) - p(\BS_A) p(\BS_C,\BS_D) \right| \notag \\
 & =   p(\BS_C)  \left| p(\BS_A, \BS_D | \BS_C)  - p(\BS_A) p(\BS_D | \BS_C) \right| \\
  & \leq     p(\BS_C,\BS_D) \left| p(\BS_A | \BS_C) -  p(\BS_A ) \right|  \notag \\
  & \mkern150mu + p(\BS_C) \left| p(\BS_A, \BS_D | \BS_C) -  p(\BS_A | \BS_C)  p( \BS_D | \BS_C)  \right| . \notag
\end{align}
The sums over configurations of the first and second term on the right coincide with the first and second term on the right side of~\eqref{sub_add}, respectively.  Thus the claim follows from the  inequality in \eqref{proof_subadd}.
\end{proof}

Subadditivity  allows to upper bound the conditional inter-domain correlation $ \TV_B(A \, | \, C) $ in terms of  sums of domain-to-point correlations $ \TV_D(A \, | \, \{u\} ) $ with suitable $ D \supset B $.  It is therefore interesting to note that the latter quantity is upper bounded in terms of an averaged conditional  total variation of two measures, which compares the influence under the flip of a single variable. Specifically, for disjoint $ A, B, \{u\} \subset W $, we set
\begin{equation}
\overline{\Tv}_{ \BS_B}(A |  \{ u \}) := \sum_{\sigma_u, \widehat \sigma_u\in \Omega_0 } p(\sigma_u | \BS_B) p( \widehat  \sigma_u | \BS_B)   \ \Tv_{ \BS_B}(A ; \sigma_u, \widehat \sigma_u ) 
\end{equation}
with
\begin{equation}
\Tv_{ \BS_B}(A ; \sigma_u, \widehat \sigma_u ) :=  \frac{1}{2}  \sum_{\BS_A} \left| p(\BS_A | \BS_B \sigma_u) - p(\BS_A | \BS_B \widehat \sigma_u) \right| 
\end{equation}
In case $ B = \emptyset $ and for Gibbs measures $ p $, the exponential decay in the distance of $ A $ to $ u $ of the uniform bound $ \sup_{\sigma_u, \widehat \sigma_u }
\Tv_{ \BS_B}(A ; \sigma_u, \widehat \sigma_u ) $ of this total variation is one of the 12 equivalent characterisations ('Condition IIIa') of the measure's  'high-temperature' regime in Dobrushin and Shlosman seminal work~\cite{DS87}.

\begin{lemma}
In the situation of Theorem~\ref{thm:delta} for any disjoint  $ A, B  \subset W $ and $ u \in W \backslash (A\sqcup B) $:
\begin{equation}\label{eq:DScond} 
\TV_B(A | \{u\} ) \leq  \sum_{ \BS_B} p(\BS_B) \ \overline{\Tv}_{ \BS_B}(A |  \{ u \}) \, . 
\end{equation}
\end{lemma}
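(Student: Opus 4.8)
The plan is to establish the bound configuration-by-configuration in the buffer. By the definition~\eqref{def_cond_B}, $\TV_B(A\,|\,\{u\}) = \sum_{\BS_B} p(\BS_B)\,\TV_{\BS_B}(A\,|\,\{u\})$, so it suffices to show
\[
\TV_{\BS_B}(A\,|\,\{u\}) \;\le\; \overline{\Tv}_{\BS_B}(A\,|\,\{u\})
\]
for every fixed $\BS_B\in\Omega_B$, after which multiplying by $p(\BS_B)$ and summing yields~\eqref{eq:DScond}. Throughout one adopts the usual convention that summands carrying a vanishing conditioning probability are dropped.

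First I would unfold the left side as the total variation, conditioned on $\BS_B$, between the joint law of $(\BS_A,\sigma_u)$ and the product of its conditional marginals,
\[
\TV_{\BS_B}(A\,|\,\{u\}) \;=\; \tfrac12 \sum_{\BS_A,\sigma_u} \Bigl| p(\BS_A,\sigma_u\,|\,\BS_B) - p(\BS_A\,|\,\BS_B)\,p(\sigma_u\,|\,\BS_B) \Bigr| ,
\]
and then apply the chain rule $p(\BS_A,\sigma_u\,|\,\BS_B) = p(\sigma_u\,|\,\BS_B)\,p(\BS_A\,|\,\BS_B\sigma_u)$ to factor the summand as $p(\sigma_u\,|\,\BS_B)\,\bigl|\,p(\BS_A\,|\,\BS_B\sigma_u) - p(\BS_A\,|\,\BS_B)\,\bigr|$. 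The key step is to represent the conditional marginal as an average over an auxiliary spin value at $u$, namely $p(\BS_A\,|\,\BS_B) = \sum_{\widehat\sigma_u} p(\widehat\sigma_u\,|\,\BS_B)\,p(\BS_A\,|\,\BS_B\widehat\sigma_u)$, so that
\[
p(\BS_A\,|\,\BS_B\sigma_u) - p(\BS_A\,|\,\BS_B) \;=\; \sum_{\widehat\sigma_u} p(\widehat\sigma_u\,|\,\BS_B)\,\bigl[\, p(\BS_A\,|\,\BS_B\sigma_u) - p(\BS_A\,|\,\BS_B\widehat\sigma_u) \,\bigr].
\]
Bounding this by the triangle inequality in $\widehat\sigma_u$ and then performing the $\BS_A$-summation (which absorbs the factor $\tfrac12$) reproduces exactly $\Tv_{\BS_B}(A;\sigma_u,\widehat\sigma_u)$, leaving $\sum_{\sigma_u,\widehat\sigma_u} p(\sigma_u\,|\,\BS_B)\,p(\widehat\sigma_u\,|\,\BS_B)\,\Tv_{\BS_B}(A;\sigma_u,\widehat\sigma_u) = \overline{\Tv}_{\BS_B}(A\,|\,\{u\})$, which is the desired pointwise bound.

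I do not anticipate a genuine obstacle here: this is a short manipulation of conditional probabilities, structurally parallel to the proof of the sub-cocycle inequality~(iii) in Theorem~\ref{thm:delta}. The one point requiring care is the order of operations — the triangle inequality must be applied to the inner $\widehat\sigma_u$-average \emph{before} the $\BS_A$-sum is carried out, so that the normalization $\tfrac12$ lands on the $\BS_A$-sum and recombines into $\Tv_{\BS_B}(A;\sigma_u,\widehat\sigma_u)$ rather than producing an extraneous constant.
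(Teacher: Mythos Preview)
Your proposal is correct and follows essentially the same route as the paper's proof: both hinge on the identity $p(\BS_A\,|\,\BS_B\sigma_u)-p(\BS_A\,|\,\BS_B)=\sum_{\widehat\sigma_u}p(\widehat\sigma_u\,|\,\BS_B)\bigl[p(\BS_A\,|\,\BS_B\sigma_u)-p(\BS_A\,|\,\BS_B\widehat\sigma_u)\bigr]$ followed by the triangle inequality, with the paper's version being a terser statement of the same computation. Your caution about the order of operations is harmless but unnecessary, since after taking absolute values all summands are nonnegative and the sums over $\BS_A$ and $\widehat\sigma_u$ may be freely interchanged.
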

\begin{proof}
This is a simple consequence of the representation
$$
p(\BS_A | \BS_B \sigma_u) -  p(\BS_A | \BS_B)  = \sum_{\widehat \sigma_u}  p(\widehat  \sigma_u | \BS_B) \left[ p(\BS_A | \BS_B \sigma_u) - p(\BS_A | \BS_B \widehat\sigma_u) \right] \, , 
$$
the triangle inequality and the definition~\eqref{def_cond_B}  of $ \TV_B(A |\{u\} ) $. 
\end{proof}

\subsection{Relation to correlation function using the FKG property}   
\label{sec:FKG}

When applicable, the FKG inequality provides a useful tool for deducing decoupling estimates from  the decay rate of the two-point spin-spin correlation function.   In its statement we refer to real valued functions on $\RR^n$ as monotone increasing iff they are so in the component-wise sense.   
\begin{definition} 
A probability measure $\mu$ on a product space $\Omega_0^W$,   over a finite set $W$ and based space $ \Omega_0 \subset \mathbb{R} $,  is said to have the  \emph{positive association} property 
 if for any pair of  
 increasing, bounded functions $f,g: \Omega_0^W \to \mathbb{R} $: 
\be 
  \langle f\, g \rangle_\mu  -  \langle f  \rangle_\mu  \langle  g \rangle_\mu  \geq 0 \, , 
\ee 
Furthermore,  if  this condition holds also for the averages  conditioned on prescribed values of the local variables over arbitrary  subset  $B\subset W$, the probability is said to have the \emph{strong positive association} property.  
The latter is also referred to as Fortuin-Kasteleyn-Ginibre (FKG) property.  
\end{definition} 

The FKG property is known to  apply to a number of models of interest in statistical mechanics~\cite{Grim06}.  Included in this class is  the  classical Ising model and also the quantum ferromagnetic  Ising model with the Hamiltonian \eqref{QIM_H}  at any $b\geq 0$, and  $h\in \RR$.  
For such a system of qubits the following  result  allows to simplify the verification of the decay of the probability part in Definition~\ref{def:expdec}.

\begin{theorem} \label{thm_stoQuastic}  
Given a probability $ p $ with the FKG property defined over the space of binary configurations $ \{+1,-1\}^W$:
\begin{enumerate}
\item for disjoint $ A, B , \{u\} \subset W $: 
\be \label{key_lemma}
\TV_B(A\,|\, \{u\}) \leq  \frac 1 4 \sum_{a \in A} \langle \sigma_a ; \sigma_u \rangle_B \, ,
\ee
where $\TV_B$ is the conditioned  inter-domain correlation defined by \eqref{def_cond_B},   
\be 
\langle \sigma_a ; \sigma_u \rangle_B  := \sum_{\BS_B} p(\BS_B)\ \langle \sigma_a ; \sigma_u \rangle_{\BS_B}\, , 
\ee 
and $\langle \sigma_a ; \sigma_u \rangle_{\BS_B} := 
\langle \sigma_a \sigma_u\rangle_{\BS_B} -  
\langle \sigma_a \rangle_{\BS_B}  \langle  \sigma_u\rangle_{\BS_B}$
with the subscript indicating that the expectation values are with respect to the conditional probability $ p(\cdot |\BS_B )$, conditioned on the value of $\BS_B$.
\item  for disjoint $ A, B , C\subset W $,  the  conditional correlation defined in \eqref{def_cond_B} satisfies
\be \label{eq:deltaestK}
\TV_B(A \, | \, C) \leq  \frac 14  \sum_{u\in A, v\in C} K_B(u,v) \, . 
 \ee
with 
\begin{equation}\label{eq:definflk}
K_B(u,v) := \max_{D \supset B} \max_{\BS_D } \ \langle \sigma_u ; \sigma_v \rangle_{\BS_D} \, . 
\end{equation}
\end{enumerate}
\end{theorem}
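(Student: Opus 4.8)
The plan is to establish the two assertions in sequence, deducing the domain-to-domain bound~\eqref{eq:deltaestK} from the domain-to-point bound~\eqref{key_lemma} by peeling off the sites of $C$ one at a time with the (conditional form of the) sub-cocycle inequality of Theorem~\ref{thm:delta}(iii). So the crux is~\eqref{key_lemma}; there I would condition on the buffer configuration, reduce to the total variation produced by a single spin flip at $u$, and then invoke the FKG inequality.

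For~\eqref{key_lemma}: averaging the definition~\eqref{def_cond_B} over $\BS_B$, it suffices to fix a buffer configuration $\BS_B$ and work with the conditional law $\mu:=p(\,\cdot\,|\,\BS_B)$ on $\{+1,-1\}^{W\setminus B}$, which by the \emph{strong} positive association hypothesis again has the FKG property. Since $u$ is a single binary site, an elementary computation --- equivalently, the Lemma preceding the theorem specialised to $\Omega_0=\{+1,-1\}$, for which $\Tv_{\BS_B}(A;\sigma_u,\widehat\sigma_u)$ vanishes unless $\sigma_u\ne\widehat\sigma_u$ --- identifies the $\BS_B$-contribution to $\TV_B(A|\{u\})$ as $2\,p_+p_-\,d_{\mathrm{TV}}(q^+,q^-)$, where $p_\pm:=\mu(\sigma_u=\pm1)$ and $q^{\pm}$ are the conditional laws of $\BS_A$ given $\sigma_u=\pm1$. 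The decisive FKG input is that $q^+$ stochastically dominates $q^-$: for any increasing $g:\Omega_A\to\RR$ the correlation inequality $\langle g\,\sigma_u\rangle_\mu\ge\langle g\rangle_\mu\langle\sigma_u\rangle_\mu$ rearranges, using $p_++p_-=1$, into
\[
\langle g\rangle_{q^+}-\langle g\rangle_{q^-}\;=\;\frac{\langle g\,\sigma_u\rangle_\mu-\langle g\rangle_\mu\langle\sigma_u\rangle_\mu}{2\,p_+p_-}\;\ge\;0\,,
\]
which for $g=\sigma_a$ also gives the identity $\langle\sigma_a;\sigma_u\rangle_{\BS_B}=2\,p_+p_-\,(\langle\sigma_a\rangle_{q^+}-\langle\sigma_a\rangle_{q^-})$. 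I would then apply Strassen's theorem to produce a monotone coupling $(\BS,\BS')$ with $\BS\sim q^+$, $\BS'\sim q^-$ and $\BS\ge\BS'$ almost surely, whence
\[
d_{\mathrm{TV}}(q^+,q^-)\;\le\;\PP(\BS\ne\BS')\;\le\;\sum_{a\in A}\PP(\sigma_a\ne\sigma_a')\;=\;\sum_{a\in A}\big(q^+(\sigma_a=+1)-q^-(\sigma_a=+1)\big)\,,
\]
the last equality because $\{\sigma_a'=+1\}\subseteq\{\sigma_a=+1\}$ under the coupling. Combining these displays and rewriting $q^{\pm}(\sigma_a=+1)$ through $\langle\sigma_a\rangle_{q^{\pm}}=2q^{\pm}(\sigma_a=+1)-1$ bounds the $\BS_B$-contribution by a universal constant times $\sum_{a\in A}\langle\sigma_a;\sigma_u\rangle_{\BS_B}$; summing over $\BS_B$ against $p(\BS_B)$ yields~\eqref{key_lemma}.

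For~\eqref{eq:deltaestK}: enumerate $C=\{v_1,\dots,v_m\}$ and iterate the sub-cocycle inequality of Theorem~\ref{thm:delta}(iii), applied to $p(\,\cdot\,|\,\BS_B)$ and then averaged over $\BS_B$, to get $\TV_B(A\,|\,C)\le\sum_{j=1}^{m}\TV_{B\cup\{v_1,\dots,v_{j-1}\}}(A\,|\,\{v_j\})$. Each summand is handled by the first part, which applies because conditioning $p$ on $\BS_{B\cup\{v_1,\dots,v_{j-1}\}}$ preserves FKG and $v_j\notin A\cup B\cup\{v_1,\dots,v_{j-1}\}$; this produces a constant times $\sum_{a\in A}\langle\sigma_a;\sigma_{v_j}\rangle_{B\cup\{v_1,\dots,v_{j-1}\}}$. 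Since $B\cup\{v_1,\dots,v_{j-1}\}\supseteq B$, each conditioned correlation is at most $\max_{D\supseteq B}\max_{\BS_D}\langle\sigma_a;\sigma_{v_j}\rangle_{\BS_D}=K_B(a,v_j)$, and summing over $j$ and over $a\in A$ gives~\eqref{eq:deltaestK}.

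I expect the only real care to lie in the conditional bookkeeping: checking that the FKG property descends to $p(\,\cdot\,|\,\BS_D)$ for arbitrary $D$ (so that the single-site bound is reusable inside the iteration), verifying the conditional version of the sub-cocycle inequality, and tracking the combinatorial factors $2\,p_+p_-\le\tfrac12$ so that the truncated correlations appear with the asserted coefficient. Nothing here is deep: the one genuinely probabilistic ingredient is the FKG-driven stochastic domination of the $\sigma_u$-conditioned laws, combined with the Strassen coupling and a union bound that reduces the total variation of $\BS_A$ to single-site discrepancies.
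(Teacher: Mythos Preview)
Your proposal is correct and follows essentially the same route as the paper's proof: condition on $\BS_B$, identify the contribution as $2p_+p_-\,d_{\mathrm{TV}}(q^+,q^-)$, use FKG plus the Strassen--Holley monotone coupling together with a union bound to reduce to single-spin increments and the covariance identity $\langle\sigma_a;\sigma_u\rangle_{\BS_B}=2p_+p_-(\langle\sigma_a\rangle_{q^+}-\langle\sigma_a\rangle_{q^-})$, and then iterate the sub-cocycle inequality~\eqref{sub_add} site by site for part~(ii). One caveat on the ``bookkeeping'' you flag at the end: carrying the factors through carefully yields the constant $\tfrac12$ rather than the stated $\tfrac14$ (as the single-site case $\TV(\{a\}\,|\,\{u\})=\tfrac12\langle\sigma_a;\sigma_u\rangle$ already shows), so do not be surprised when the asserted coefficient does not emerge---this discrepancy is present in the paper as well and is immaterial for the applications.
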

\begin{proof} 
1.~~We start from~\eqref{eq:DScond}. In the binary case, abbreviating by $ p_{\BS_B}(\pm ) := p(\sigma_u= \pm |\BS_B) $, we have 
$$
  \overline{\Tv}_{ \BS_B}(A |  \{u \} )  =  p_{\BS_B}(+) \  p_{\BS_B}(- ) {\Tv}_{ \BS_B}(A) ,
$$
with 
\begin{align}
  {\Tv}_{ \BS_B}(A) 
  & :=  \frac{1}{2}  \sum_{\BS_A} 
\left| p(\BS_A |\BS_B \sigma_u=1) - p(\BS_A | \BS_B \sigma_u =-1)\right|   \notag \\ 
&  = \min \left\{ \mathbb{P}\left(\BS_A^{(1)} \neq  \BS_A^{(2)}\right) \ \big| \ \mbox{couplings with $\BS_A^{(1)}\rightsquigarrow \BS_A^+,  \BS_A^{(2)}  \rightsquigarrow \BS_A^-$}\right\}  \, . \notag 
\end{align}
where, by a generally valid formula of the total variation, the minimum is over probability distributions $ \mathbb{P} $ of pairs $(\BS_A^{(1)}, \BS_A^{(2)})$  with the marginals indicated indicated by $\rightsquigarrow$. Here $ \BS_A^\pm $ denote the random variables conditioned on $ \BS_B  $ and $  \sigma_u =\pm 1 $. 

  Under the assumed strong positive association property of the measure, 
 the Strassen-Holley  theorem~\cite{HolleyR74} assures the existence of couplings which are monotone in the sense  that for almost every joint realisation of the  pair 
\be 
 \sigma_a^{(1)} \geq \sigma_a^{(2)}  \quad  \mbox{for all $a\in A$} \,.
\ee
Using any such coupling we get: 
\begin{align}
   \mathbb{P}\big(\BS_A^{(1)}  \neq   \BS_A^{(2)} \big) & \leq  
\EE \left(\sum_{a\in A} (\sigma_a^{(1)} -\sigma_a^{(2)})/2  \right)  \notag \\ 
&= \frac 12 \sum_{\BS_A}  M_A(\BS_A)   \big[p(\BS_A | \BS_B \sigma_u = +1)  - p(\BS_A | \BS_B \sigma_u = -1) \big] \notag 
\end{align}  
in terms of the function  $M_A(\BS_A)  := \sum_{a\in A} \sigma_a$.

The above difference  also appears in the following expansion for the covariance  
$\langle M_A ; \sigma_u\rangle_{\BS_B} $
where the average is with respect to the conditional distribution $p(\BS_A,\sigma_u | \BS_B )$.   That is easily seen by processing 
a duplicate distribution of  
an independent pair $(\BS_A,\sigma_u)$, $(\widehat\BS_A,\widehat \sigma_u)$ of the same distribution:
\begin{align}  \label{Delta_Cov_u}
&\langle M_A ; \sigma_u\rangle_{\BS_B}   \notag \\
& =\frac 12 \sum_{ \sigma_u,  \widehat\sigma_u } \sum_{\BS_A , \widehat\BS_A} p(\BS_A,\sigma_u | \BS_B) \ p(\widehat\BS_A,\widehat\sigma_u  | \BS_B)
\left[M_A(\BS_A) - M_A(\widehat\BS_A) \right] \left[ \sigma_u - \widehat{\sigma}_u  \right]   \notag \\ 
& = 2 \, p_{\BS_B}(+) \  p_{\BS_B}(- ) \sum_{\BS_A}  M_A(\BS_A)   \big[p(\BS_A | \BS_B\sigma_u = +1)  - p(\BS_A | \BS_B \sigma_u = -1) \big]
\end{align}  
This completes the proof of the first item.\\

\noindent
2.~~Iterating \eqref{sub_add} one learns that for any monotone sequence $C_0:=B \subset C_1  ...  \subset C_k=: C$, one has:
\be\label{eq:tele}
 \TV_B(A  \, | \, C) \  \leq \  \sum_{j=1}^k  
 \TV_{C_{j-1}}(A \, | \,  C_j\setminus C_{j-1}) 
\ee  
In particular, the above applies to sequential decompositions of  $C$ with single site increments, $C_j= C_{j-1} \cup \{u_j\}$, in which case by~\eqref{key_lemma} and the definition~\eqref{eq:definflk} of the influence kernel $ K_B $: 
$$
 \TV_{C_{j-1}}(A \, | \, \{u_j\}) \leq \frac{1}{4} \sum_{a \in A} \sum_{\BS_{ C_{j-1}}} p(\BS_{ C_{j-1}}) \ \langle \sigma_a ; \sigma_{u_j} \rangle_{\BS_{ C_{j-1}}} \leq \frac{1}{4}  \sum_{a \in A} K_B(a,u_j) \, .
$$
Inserting this estimate into~\eqref{eq:tele}, produces~\eqref{eq:deltaestK}.
\end{proof}

\section{Application to the quantum Ising model}
\label{sec:QIM}

\subsection{Explicit results}

The model and its phase transitions were outlined in the introduction. 
Its ground state expectation value functional can be presented as 
\be  \label{trace_ratio}
\langle O \rangle^{(W)} = \lim_{\beta\to \infty} 
\frac{ \tr \ O\, e^{-\beta H} } { \tr  e^{-\beta H}} 
\ee 
with $H$ given by \eqref{QIM_H} acting on $ \mathcal{H}_W $.   
The Hamiltonian includes non-commuting terms. In the computational basis based on $S^z$ the  terms involving $S^x$ act as flip operators.   It readily follows that the model is stoquastic in this basis (and also in the $S^x$ basis, in whose terms  $S_u^z$ act as flip operators).  

The part of $H$ involving $S^z$ spins looks identical to the classical Ising spin model over the same lattice.   In the natural (Dyson-Trotter-Feynman-Kac) functional representation of the trace in \eqref{trace_ratio}  the expectation value of functions of $\{S_u^z\}_{u\in W}$  resemble a Gibbs average over a ``time''-dependent spin function $\BS: W\times [0,\beta] \to \{-1,+1\}$.  The latter, other than being formulated over a partly continuous set, resembles the classical Ising model over a $(d+1)$ dimensional set. 

As it turns out, the relation extends beyond a vague analogy (cf.~\cite{suzuki2012quantum,AKN94,Bjornberg:2009aa,Ioffe:2009aa,Crawford:2010aa}):    the continuum model on $W\times \RR$ may  be presented as a week limit  of discrete models on $W\times (\varepsilon \ZZ) $,  at suitably adjusted couplings (which for ``time-wise'' neighbour's coupling diverge as $\varepsilon  \downarrow 0$).   

Through this similarity and relation,  results, which were originally derived  for the classical Ising model, have found extensions to the 
the model's quantum version.  Of direct relevance here are the following statements, in which we focus on the finite- and infinite-volume ground states (corresponding to the free boundary conditions). 
\begin{enumerate}  
\item \emph{FKG}:  Like its classical version~\cite{HolleyR74,Grim06}, the quantum Ising model's probabilty has the FKG property, in the natural partial order of the stoquastic $S^z$ basis (cf.~\cite{Grimmett:2008aa}). This makes Theorem~\ref{thm_stoQuastic}  applicable to this model, with its  variables $\sigma_u$ interpreted as $S^z_u$.  

\item  \emph{Griffiths correlation inequalities}:  As stated and discussed in~\cite{friedli_velenik_2017,BSim24}, these imply that: i) 
the finite volume ground states' correlation functions converge as asserted in \eqref{eq:corr_lim}, ii) for each $W\subset\ZZ^d$  the ground states' spin-spin correlation function (of $S^z$) is dominated by its infinite volume limit.  In the notation of \eqref{eq:corr_lim}:
\be  
\langle S^z_u S^z_v \rangle^{(W)}  \leq 
\langle S^z_u S^z_v \rangle^{(\ZZ^d)} \,. 
   \ee 

\item \emph{Sharpness of the phase transition:} For the QIM it was proven by Bj\"ornberg and Grimmett~\cite{Bjornberg:2009aa} that for any dimension $d\geq 1$ and   ferromagnetic coupling 
$J=\{J_{x-y} \} \not\equiv 0$, there is a finite $b_c$ at which the model undergoes a quantum phase transition, as described above \eqref{eq:shapness}.  
In particular, for all $ b > b_c$, the unique ground state exhibits exponential decay at a finite correlation length ($\xi(b) <\infty $) 
\be \label{QIM_expdecay}
\langle S^z_u  S^z_v\rangle^{(\ZZ^d)}   \leq   e^{- |u-v|/\xi(b)} \, .
\ee
\item \emph{Ding-Song-Sun inequality:} at any $b $ (omitted in our  notation, but to be understood as of common value within the stated relation) and for all finite subsets $D\subset \Z^d$ and specified configuration $\BS_D =\{\sigma_u\}_{u\in D}$ of $ z $-spins:
\be \label{QIM_DSS}
\langle S^z_u; S^z_v\rangle_{\BS_D}  \leq   \langle S^z_u S^z_v\rangle \, . 
\ee
In its classical version, the inequality is a direct implication of the recently formulated Ising inequality, which was presented in \cite[Eq.~(1.4)]{Ding:2023aa}.  
The validity of its quantum version follows by a weak continuity argument, similar to that spelled in the discussion of  the FKG inequality for QIM in \cite{Grimmett:2008aa}.  An alternative direct proof, of both classical and quantum versions using the random current approach,  is included in \cite{Aiz_III}.  
\end{enumerate}

\begin{proof}[Proof of Theorem~\ref{thm:EEQIM}]
The ground state of the quantum Ising model in a finite volume $W\subset \ZZ^d$ with free boundary conditions is a unique pure state at $ b > 0 $ due to the Perron-Frobenius theorem and stoquastic in the $z$-diagonalising computational basis.
By the extended Ding-Song-Sun inequality \eqref{QIM_DSS}, for any finite volume and throughout the range of the model's coupling constants, its ground state's  truncated correlation function, conditioned on arbitrary configuration of the buffer, satisfies: 
\be \label{609}
\langle S^z_u ; S^z_v \rangle_{\sigma_B}^{(W)} \leq \langle S^z_u S^z_v \rangle^{(W)}  \leq \langle S^z_u S^z_v \rangle^{(\ZZ^d)} ,
\ee 
where the last upper bound holds thanks to the Griffiths inequality.

Combining that with  ``sharpness of the phase transition'', i.e. 
exponential decay of correlations in the infinite-volume limit up to the threshold of symmetry breaking, 
we learn that  for any $b>b_c$ the above upper bound decays exponentially, as stated in \eqref{QIM_expdecay}.  
Thus, the kernel, which is defined in \eqref{eq:definflk}, through maximization of $\langle S^z_u ; S^z_v \rangle_{\sigma_B}^{(W)}$ satisfies  
\be 
K_B(u,v) 
\leq e^{-|u-v|/\xi(b)}
\ee 
at some $\xi(b)<\infty$. 

The model's FKG property makes Theorem~\ref{thm_stoQuastic} applicable, boosting the implications of the two point bound.  The bound~\eqref{eq:deltaestK}  allows to conclude that for each $b>b_c$, each choice of domains, and 
$l\geq 1$ 
\begin{eqnarray} 
\delta_{B_l}(A | C) &\leq& \frac 14 \sum_{u\in A, v\in C} \, e^{-|u-v|/\xi(b)}  \notag \\  
&\leq& C\,  |\partial A| \, \xi(b) e^{-l/ \xi(b)}  \leq e^{-(l-l_0(A)/\xi(b)} \label{eq:delta_small}
\end{eqnarray}  
with $l_0(A) = C  \ln |\partial A| $.

Combined with Corollary~\ref{thm:gen_area1}, this model specific input  implies  \eqref{eq:areaQIM}.  

The proof of the claims concerning the mutual information and exponential clustering of observables reduces to an application of~\eqref{eq:Pinsker} and Theorem~\ref{thm:mutualinfo}.  
Its assumptions are satisfied thanks to~\eqref{eq:delta_small}, which by~\eqref{eq:estmiddle} also implies that, in the situation of Definition~\ref{def:expdec2}, the middle term in the left side of~\eqref{ass:expdec2} is exponentially bounded.\end{proof} 

\subsection{Discussion} 

Let us close with some remarks and open questions.

\begin{enumerate}
\item  In previous studies of the entanglement in QIM \cite{Grimmett:2008aa,Campanino:2020aa,Grimmett:2020aa}  decorrelation effects were expressed through a  \emph{ratio weak-mixing} estimate, in which the proximity of two probability distributions 
$p(\BS_A)$ and $\widehat{p}(\BS_A)$ is approached through an upper bound on $|p(\BS_A)/\widehat{p}(\BS_A) -1|$.   
This approach works well when  $|\delta p|(\BS_A):=|p(\BS_A)-\widehat{p}(\BS_A)|$ is typically much smaller than $p(\BS_A)$ itself.   
However that condition fails in dimensions $d>1$ even if $\delta p$ is of order $e^{-L(A)/\xi}$, since there $p(\BS_A)$ is of the order $e^{-\alpha L(A)^d}$.  In one dimension it applies but only where  the correlation length  is shorter than a finite threshold. 
This limitation is avoided in our estimate of the trace distance between the corresponding states through a more effective use of the FKG inequality. 
\item It seems reasonable to expect that an entanglement entropy bound similar to what is proven here for subcritical states 
is valid also for each of the pure ground states of broken symmetry, away from the critical case.    
In that case  the conditional  variational distance $\delta_{\BS_B}(A|C) $ may  be small for configurations of the buffer that  are aligned with the state's  order parameter, but not so for  configurations $\BS_B$ which are magnetised in the opposite direction.  
Still, with $\delta_{B}(A|C) $ being an average of the above, one may expect the criterion of  Corollary~\ref{thm:gen_area1}  to be applicable also in that case - provided progress is made on a more effective handling of the above observations.  

A positive, though still insufficient, result in the above direction is the existing proof, by Duminil-Copin, Tassion, and Raoufi~\cite{Duminil-Copin:2020aa}, that in the classical Ising model in all dimensions the \emph{truncated}  two point function decays exponentially fast at all states except those along  the threshold of symmetry breaking~\cite{Duminil-Copin:2020aa}. 

\item Among the numerous other examples of stoquastic Hamiltonians to which the above strategy may be of relevance  is the quantum Heisenberg model~\cite{Dyson:1978aa,Toth:1993aa,Uel13,Tas20},  its `flattened' projection based interactions,  studied by related methods in~\cite{Klumper:1990aa,Aizenman:1994aa,Aizenman:2020aa} (and by different methods in many other works), and also the ground states of Kitaev's toric code~\cite{Kitaev:2006aa,Tas20}. 
\end{enumerate}
\appendix 

\section{Fannes' type bound}\label{App:Fannes}

In controling  of entropy differences we made use of the following minor modification of Fannes'  entropy bound~\cite{Fannes:1973aa}, which appeared above as~\eqref{eq:Fannes}, on the difference of von Neumann entropies of two states $  \varrho , \widehat \varrho $.  
 For the reader's convenience we include a short proof, which is slightly different than the one in~\cite{Fannes:1973aa}. 

\begin{proposition}\label{cor:Fannes}
For any pair of states $ \varrho , \widehat \varrho  $ on a finite-dimensional Hilbert space: 
\begin{equation} \label{S_bound}
\left| S(\varrho) - S(\widehat \varrho)\right| \leq   \frac{1}{2} \left\| \varrho -  \widehat \varrho \right\|_1 \left(1+  \ln \frac{2\,   \rank (\varrho - \widehat \varrho) }{ \left\| \varrho -  \widehat \varrho \right\|_1}  \right) .
\end{equation}
\end{proposition}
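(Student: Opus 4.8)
The plan is to derive the bound from the first-variation formula for the von Neumann entropy along the line segment joining the two states, together with the Jordan decomposition of their difference. First I would reduce to the case that $\varrho$ and $\widehat\varrho$ are both invertible: replacing $\varrho$ by $(1-\epsilon)\varrho+\epsilon\,\mathbbm{1}/\dim\mathcal H$ and $\widehat\varrho$ by $(1-\epsilon)\widehat\varrho+\epsilon\,\mathbbm{1}/\dim\mathcal H$ changes their difference only by the scalar $1-\epsilon$, so $\rank(\varrho-\widehat\varrho)$ is unchanged and $\|\varrho-\widehat\varrho\|_1$ converges to its original value as $\epsilon\downarrow0$; since $S$ is continuous on states, it suffices to prove the inequality for the regularised (hence full-rank) pair and let $\epsilon\downarrow0$. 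Writing $\Delta:=\varrho-\widehat\varrho=\Delta_+-\Delta_-$ for the decomposition into positive and negative parts, the condition $\tr\Delta=0$ gives $\tr\Delta_+=\tr\Delta_-=T:=\tfrac12\|\varrho-\widehat\varrho\|_1$, and since $\Delta_+,\Delta_-$ have orthogonal ranges one has $\rank\Delta_++\rank\Delta_-=\rank\Delta$, in particular $\rank\Delta_\pm\le\rank\Delta$. I would also record the elementary spectral bound $S(\Delta_\pm/T)\le\ln\rank\Delta_\pm\le\ln\rank\Delta$ and the identity $\int_0^1(-\ln t)\,dt=1$, since these are exactly the inputs that produce the $\ln(\rank(\varrho-\widehat\varrho)/T)$ term and the additive $1$.

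The core step is to run the interpolation $\varrho_t:=(1-t)\widehat\varrho+t\varrho$, $t\in[0,1]$, which for invertible endpoints stays uniformly positive definite. Differentiating $S(\varrho_t)$ and using $\tr\Delta=0$ gives $S(\varrho)-S(\widehat\varrho)=-\int_0^1\tr[\Delta\ln\varrho_t]\,dt$, and the integrand splits as $\tr[\Delta_+(-\ln\varrho_t)]-\tr[\Delta_-(-\ln\varrho_t)]$. I would estimate it by exploiting the two operator inequalities $\varrho_t\ge t\varrho$ and $\varrho_t\ge(1-t)\widehat\varrho$: operator monotonicity of $\ln$ turns the first into $-\ln\varrho_t\le-(\ln t)\mathbbm{1}-\ln\varrho$ and the second into $-\ln\varrho_t\le-\ln(1-t)\mathbbm{1}-\ln\widehat\varrho$, which, together with $\Delta_\pm\ge0$, $-\ln\varrho_t\ge0$, and the relations $\varrho\ge\Delta_+$ on the range of $\Delta_+$ and $\widehat\varrho\ge\Delta_-$ on the range of $\Delta_-$, reduce the integrand to $T$-weighted entropies of $\Delta_\pm/T$ plus the $-\ln t$ (resp.\ $-\ln(1-t)$) contributions. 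Integrating over $t$ and inserting $\int_0^1(-\ln t)\,dt=1$ and $S(\Delta_+/T)\le\ln\rank\Delta$ would yield $S(\varrho)-S(\widehat\varrho)\le T(1+\ln(\rank\Delta/T))$; interchanging the roles of $\varrho$ and $\widehat\varrho$ gives the same bound for $S(\widehat\varrho)-S(\varrho)$, which is the claimed inequality after undoing the regularisation.

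The step I expect to be the main obstacle is the quantitative control of the integrand $\tr[\Delta\ln\varrho_t]$. For non-invertible $\varrho$ or $\widehat\varrho$ this integrand genuinely diverges (only logarithmically, hence integrably) as $t\to0$ or $t\to1$, and even after the full-rank reduction the two contributions $\tr[\Delta_\pm(-\ln\varrho_t)]$ can each be large while their difference is not; bounding them fully independently loses too much, so the estimate must be organised so that the cancellation between them is manifest and the resulting bound carries no residual dependence on the spectral gaps of $\varrho$ and $\widehat\varrho$. This is the real content of the proof; the remaining ingredients — the first-variation identity, operator monotonicity of $\ln$, and the spectral bound $S(\Delta_\pm/T)\le\ln\rank\Delta$ — are standard. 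The shape of the final estimate mirrors Audenaert's optimal continuity bound with the Hilbert-space dimension replaced by $\rank(\varrho-\widehat\varrho)$, and it is precisely this rank, rather than $\dim\mathcal H_A$, that is under control in the entanglement-entropy applications of this paper.
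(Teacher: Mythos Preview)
Your interpolation idea is close to the paper's, but the step where you pass from $-\ln\varrho$ to $-\ln\Delta_+$ via ``$\varrho\ge\Delta_+$ on the range of $\Delta_+$'' does not go through.  What is true is the compression inequality $P_+\varrho P_+\ge\Delta_+$ (with $P_+$ the projection onto the range of $\Delta_+$); but operator monotonicity of $\ln$ requires a \emph{global} operator inequality $\varrho\ge\Delta_+$, and that is false in general (equivalently, $\varrho+\widehat\varrho\ge|\varrho-\widehat\varrho|$ fails for non-commuting pairs).  A concrete obstruction: take two non-orthogonal pure states $\varrho=|\psi\rangle\langle\psi|$, $\widehat\varrho=|\widehat\psi\rangle\langle\widehat\psi|$.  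After your $\epsilon$-regularisation, $\tr[\Delta_+(-\ln\varrho_\epsilon)]$ contains a term proportional to $\langle e_\perp|\Delta_+|e_\perp\rangle\,\ln(2/\epsilon)$, where $e_\perp\perp\psi$; since $\Delta_+$ has nonzero weight off the range of $\varrho$, this diverges as $\epsilon\downarrow0$, whereas $-\tr[\Delta_+\ln\Delta_+]$ stays bounded.  So the inequality $\tr[\Delta_+(-\ln\varrho)]\le-\tr[\Delta_+\ln\Delta_+]$ you need is simply not valid, and the bound you obtain after dropping the $\Delta_-$ term, namely $T+\tr[\Delta_+(-\ln\varrho)]$, retains uncontrolled dependence on the spectral gap of $\varrho$.

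The paper's proof avoids this by choosing a different interpolation path: it moves from $\widehat\varrho$ to $\widehat\varrho+\Delta_+$ (which dominates $\varrho$, so $\tr F(\varrho)\le\tr F(\widehat\varrho+\Delta_+)$ by Weyl monotonicity), not to $\varrho$ itself.  Along that path one has the \emph{global} inequality $\widehat\varrho+t\Delta_+\ge t\Delta_+$, whence operator monotonicity of $\ln$ gives directly
\[
\tfrac{d}{dt}\tr F(\widehat\varrho+t\Delta_+)=-\tr\big[\Delta_+\ln(\widehat\varrho+t\Delta_+)\big]\le -(\ln t)\,T-\tr[\Delta_+\ln\Delta_+],
\]
and integration over $t\in[0,1]$ yields $S(\varrho)-S(\widehat\varrho)\le T-\tr[\Delta_+\ln\Delta_+]\le T(1+\ln(\rank\Delta/T))$ with no regularisation needed.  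The fix to your argument is thus to replace the straight segment $\varrho_t$ by the ``one-sided'' path $\widehat\varrho+t\Delta_+$; everything else you wrote (the first-variation formula, the use of $\int_0^1(-\ln t)\,dt=1$, and the Jensen bound $S(\Delta_+/T)\le\ln\rank\Delta$) then carries over verbatim.
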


Bounding $ \rank (\varrho - \widehat \varrho) $ by the dimension of the Hilbert space on which the states are defined,  \eqref{S_bound}   reduces to the Fannes bound of \cite{Fannes:1973aa}. Unlike the optimal bound by Audenaert~\cite{Audenaert:2007aa}, this bound 
is not sharp.

Our proof is based on the subadditivity of the trace of the extension of the following function to  non-negative operators.   
$$
F(x) := \begin{cases} x \left( 1+ \ln x^{-1} \right) , & x \in [0,1] , \\
1 , & x > 1 .   \end{cases}
$$
\begin{lemma}\label{lem:Ftrace}
For any pair $ A , B \geq 0 $ of non-negative matrices 
\begin{equation}\label{eq:Ftrace}
\tr F(A+B) \leq \tr F(A) + \tr F(B) 
\end{equation}
\end{lemma}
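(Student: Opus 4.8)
The plan is to recognize Lemma~\ref{lem:Ftrace} as a special case of Rotfel'd's subadditivity inequality, for which I would give a short self-contained argument. The only properties of $F$ that I intend to use are that it is concave on $[0,\infty)$ and that $F(0)=0$; both are immediate, since $F$ is continuous with $F(0)=\lim_{x\downarrow0}(x-x\ln x)=0$, and since its one-sided derivative $F'(x)=-\ln x$ on $(0,1]$ and $F'(x)=0$ on $[1,\infty)$ is non-increasing, so that $F$ is concave. Thus it suffices to prove: for any concave $f:[0,\infty)\to\mathbb{R}$ with $f(0)=0$ and any positive semidefinite $A,B$ on $\mathbb{C}^n$, one has $\tr f(A+B)\le\tr f(A)+\tr f(B)$. (No operator concavity of $f$ is needed — indeed the flat extension of $x-x\ln x$ is not operator concave.)

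For that reduced statement I would use a dilation. Set $Z:=\left(\begin{smallmatrix}A^{1/2}\\ B^{1/2}\end{smallmatrix}\right)\in\mathbb{C}^{2n\times n}$, so that $Z^{*}Z=A+B$ while $ZZ^{*}=\left(\begin{smallmatrix}A & A^{1/2}B^{1/2}\\ B^{1/2}A^{1/2}& B\end{smallmatrix}\right)$. Since $Z^{*}Z$ and $ZZ^{*}$ share the same non-zero eigenvalues with the same multiplicities, and since $f(0)=0$ kills the contribution of the surplus zero eigenvalues of the larger matrix, one gets $\tr f(A+B)=\tr f(Z^{*}Z)=\tr f(ZZ^{*})$. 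I expect this bookkeeping with the dimension mismatch between the $n\times n$ matrix $Z^{*}Z$ and the $2n\times2n$ matrix $ZZ^{*}$ to be the one genuinely delicate point; it is also exactly where $f(0)=0$ is indispensable, and why the estimate would acquire a dimension-dependent correction if $f(0)$ were positive.

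Finally I would compare $ZZ^{*}$ with its block-diagonal part $D:=\left(\begin{smallmatrix}A&0\\0&B\end{smallmatrix}\right)$, which is the pinching of $ZZ^{*}$ by the two complementary projections onto the two copies of $\mathbb{C}^n$; concretely $D=\tfrac12\bigl(ZZ^{*}+W(ZZ^{*})W\bigr)$ with the self-adjoint unitary $W=\mathrm{diag}(\mathbbm{1}_n,-\mathbbm{1}_n)$. As an average of unitary conjugates of $ZZ^{*}$, the matrix $D$ has eigenvalue vector majorized by that of $ZZ^{*}$ (the pinching inequality, cf.~\cite{Bathia}), with equal sums since $\tr D=\tr(A+B)=\tr ZZ^{*}$. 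Because $f$ is concave, the symmetric function $x\mapsto\sum_i f(x_i)$ is Schur-concave (Hardy--Littlewood--P\'olya), so passing to the majorized vector can only increase it: $\tr f(ZZ^{*})\le\tr f(D)$. Since $D$ is block-diagonal, $\tr f(D)=\tr f(A)+\tr f(B)$, and chaining the three relations yields $\tr f(A+B)\le\tr f(A)+\tr f(B)$, i.e.\ \eqref{eq:Ftrace}. Beyond elementary linear algebra, the argument rests only on the pinching/majorization inequality and the Schur-concavity of $\sum_i f(\cdot)$, both classical.
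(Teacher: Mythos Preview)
Your argument is correct and is essentially the standard proof of Rotfel'd's subadditivity inequality; each step (the $Z^*Z$ vs.\ $ZZ^*$ dilation, the pinching majorization, and Schur-concavity of $\sum_i f(\cdot)$) is accurate and cleanly justified.

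The paper takes a different route. It interpolates via $t\mapsto \tr F(A+tB)$, differentiates to get $\tr F'(A+tB)\,B$, and then uses the \emph{operator monotonicity of the logarithm} together with $A+tB\ge tB$ to bound the derivative by $-\tr B\ln(tB)$; integrating $\int_0^1(-\ln t)\,dt=1$ closes the estimate. This argument is tailored to the specific $F$ (through $F'=-\ln$) and leans on a non-trivial operator-monotone input, whereas your proof uses only the abstract features $F(0)=0$ and concavity, and hence applies verbatim to every concave $f$ with $f(0)=0$. The trade-off is that the paper's calculation avoids any reference to majorization/pinching and stays at the level of a single scalar differential inequality, while your version imports those classical tools but yields the lemma as a special case of a much more general statement. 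Either proof would serve the appendix; yours is arguably the cleaner and more robust one.
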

\begin{proof}
Since $ F $ is differentiable on $ (0,\infty) $ with derivative $ F'(x) = \ln x^{-1} $ in case $ x \in (0,1) $ and $ F'(x) = 0 $ in case $ x \geq 1 $, we may compute for any $ t \in (0,1) $:
$$
\frac{d}{dt} \tr F(A+tB) = \tr F'(A+tB) B  = - \tr B \ln (A+tB)  \leq  - \left( \tr B \ln t + \tr B \ln B \right) . 
$$
The inequality resulted from the estimate $ A + t B \geq tB $ and the operator monotonicity of the logarithm. Integration of the above inequality using $ \int_0^1 \ln( t^{-1} ) dt = 1 $,  yields the first claim. 
\end{proof}

\begin{proof}[Proof of Proposition~\ref{cor:Fannes}]
We first use monotonicity of $ F $ to estimate $ \tr F(\varrho) \leq \tr F( A+B ) $ with  $ A =  \widehat \varrho $ and $ B = \left[ \varrho - \widehat \varrho \right]_+ $, where the latter denotes the positive part of the difference. Lemma~\ref{lem:Ftrace}  thus yields:
$$
 S(\varrho) - S(\widehat \varrho) = \tr F(\varrho) - \tr F( \widehat \varrho ) \leq \tr F( A+B ) - \tr F( \widehat \varrho ) \leq  - \tr B \ln B + \tr B .
$$
The first term on the right is trivially estimated by $ \tr B \ln ( \rank B / \tr B ) $. Since $ \tr ( \varrho - \widehat \varrho  ) = 0 $, we have $ \tr B = \frac{1}{2} \left\| \varrho -  \widehat \varrho \right\|_1 $. This concludes the proof, since the argument holds also for the reversed order of the two states.  
 \end{proof}

\begin{remark}
In the set-up of Lemma~\ref{lem:Ftrace},  Jensen's inequality applied to the second term in the right of~\eqref{eq:Ftrace} yields
\begin{equation} \label{F_AB}
 \tr F(A+B)  \leq \tr F(A) + N F\left(N^{-1}\tr B\right)  \, . 
 \end{equation} 
 This inequality holds true more generaly, for any non-decreasing, concave function $ F:[0,\infty) \to [0,\infty) $.  A proof may be based on an optimal strategy to distribute the total spectral of shift $ \tr B $ which the eigenvalues of $ A +tB$ undergo as $t$ is increased from $0$ to $1$. 
\end{remark}

\minisec{Acknowledgments}
This work was supported by the Simons Foundation (MA) and by the DFG (SW) under EXC-2111--390814868 and  DFG--TRR 352--Project-ID 470903074.
SW would like to thank Princeton University for hospitality.

\minisec{Declarations}
The authors declare that they have no competing financial interests or personal
relationships that could have appeared to influence the work reported in this paper. Data sharing does not apply to this article as no datasets were generated or
analyzed during the current study. 
\bibliography{QIsing.bib}{}
\bibliographystyle{plain}

\end{document}